\documentclass[a4paper,UKenglish]{lipics-v2016}
 
\usepackage{microtype}
\usepackage{enumitem}
\usepackage{booktabs}
\usepackage{enumitem}


\newcommand{\ietal}{\textit{et al.}}
\newcommand{\floor}[1]{\lfloor #1 \rfloor}
\bibliographystyle{plainurl}

\title{Cache Oblivious Algorithms for Computing the Triplet Distance Between Trees\footnote{Research supported by the Danish National Research Foundation, grant DNRF84, Center for Massive Data Algorithmics (MADALGO).}}
\titlerunning{Cache Oblivious Algorithms for Computing the Triplet Distance Between Trees} 

\author[1]{Gerth Stølting Brodal}
\author[2]{Konstantinos Mampentzidis}
\affil[1]{Department of Computer Science, Aarhus University, Aarhus, Denmark\\
  \texttt{gerth@cs.au.dk}}
\affil[2]{Department of Computer Science, Aarhus University, Aarhus, Denmark\\
  \texttt{kmampent@cs.au.dk}}
\authorrunning{G.\,S. Brodal and K. Mampentzidis} 

\Copyright{Gerth Stølting Brodal and Konstantinos Mampentzidis}

\subjclass{G.2.2 Trees, G.2.1 Combinatorial Algorithms}
\keywords{Phylogenetic tree, tree comparison, triplet distance, cache oblivious algorithm}

\ArticleNo{1} 

\begin{document}

\maketitle

\begin{abstract}
We consider the problem of computing the triplet distance between two rooted unordered trees with~$n$ labeled leaves. Introduced by Dobson in 1975, the triplet distance is the number of leaf triples that induce different topologies in the two trees. The current theoretically fastest algorithm is an~$\mathrm{O}(n \log n)$ algorithm by Brodal~\ietal~(SODA 2013). Recently Jansson and Rajaby proposed a new algorithm that, while slower in theory, requiring~$\mathrm{O}(n \log^3 n)$ time, in practice it outperforms the theoretically faster $\mathrm{O}(n \log n)$ algorithm. Both algorithms do not scale to~external~memory.

We present two cache oblivious algorithms that combine the best of both worlds. The first algorithm is for the case when the two input trees are binary trees, and the second is a generalized algorithm for two input trees of arbitrary degree. Analyzed in the RAM model, both algorithms require $\mathrm{O}(n \log n)$ time, and in the cache oblivious model $\mathrm{O}(\frac{n}{B} \log_{2} \frac{n}{M})$ I/Os. Their relative simplicity and the fact that they scale to external memory makes them achieve the best practical performance. We note that these are the first algorithms that scale to external memory, both in theory and in practice, for this problem.
\end{abstract}

\section{Introduction}
\label{sec:introduction}

Trees are data structures that are often used to represent relationships. For example in the field of Biology, a tree can be used to represent evolutionary relationships, with the leaves corresponding to species that exist today, and internal nodes to ancestor species that existed in the past. For a fixed set of~$n$ species, different data (e.g., DNA, morphological) or construction methods (e.g., Q*~\cite{Qmethod}, neighbor joining~\cite{NJmethod}) can lead to trees that look structurally different. An interesting question that arises then is, given two trees $T_{1}$ and $T_{2}$ over $n$ species, how different are they? An answer to this question could potentially be used to determine whether the difference is statistically significant or not, which in turn could help with evolutionary inferences. 

Several distance measures have been proposed in the past to compare two trees that are \emph{unordered}, i.e., trees in which the order of the siblings is not taken into account. A class of them includes distance measures that are based on how often certain features are different in the two trees. Common distance measures of this kind are the Robinson-Foulds distance~\cite{RobinsonFoulds}, the triplet distance~\cite{Dobson} for rooted trees and the quartet distance~\cite{Quartet} for unrooted trees. The Robinson-Foulds distance counts how many leaf bipartitions are different, where a bipartition in a given tree is generated by removing a single edge from the tree. The triplet distance is only defined for rooted trees, and counts how many leaf triples induce different topologies in the two trees. The counterpart of the triplet distance for unrooted trees, is the quartet distance, which counts how many leaf quadruples induce different topologies in the two trees. 

Algorithms exist that can efficiently compute these distance measures. The Robinson-Foulds distance can be optimally computed in $\mathrm{O}(n)$ time~\cite{DaysAlgorithm}. The triplet distance can be computed in~$\mathrm{O}(n \log n)$ time~\cite{TripletQuartetSoda13}. The quartet distance can be computed in $\mathrm{O}(dn\log n)$ time~\cite{TripletQuartetSoda13}, where $d$ is the maximal degree of any node in the two input trees. 

The above bounds are in the RAM model. Previous work did not consider any other models, for example external memory models like the I/O model~\cite{IOmodel} and the cache oblivious model~\cite{COmodel}. Typically when hearing about algorithms for external memory models, one might (sometimes incorrectly) think of only algorithms that have to deal with large amounts of data. Hence, any practical improvement that comes from an algorithm that scales to external memory compared to an equivalent that does not, can only be noticed if the inputs are large. However, this is not necessarily the case for cache oblivious algorithms. A cache oblivious algorithm, if built and implemented correctly, can take advantage of the L1, L2, and L3 caches that exist in the vast majority of computers and give a significant performance improvement even for small~inputs.

A trivial modification of the algorithm in~\cite{DaysAlgorithm}, can give a cache oblivious algorithm for computing the Robinson-Foulds distance that achieves the sorting bound, by requiring~$\mathrm{O}(\frac{n}{B}\log_{\frac{M}{B}} \frac{n}{M})$ I/Os instead of $\mathrm{O}(n)$ I/Os for the standard implementation.  For the triplet and quartet distance measures, no such trivial modifications exist. 

In this paper we focus on the triplet distance computation and present the first non-trivial algorithms for computing the triplet distance between two rooted trees, that for the first time for this problem, also scale to external~memory.

\subparagraph*{Problem Definition.}

\begin{figure}
\captionsetup[subfigure]{justification=centering}
    \centering
    \begin{subfigure}{0.22\textwidth}
        \centering
        \includegraphics{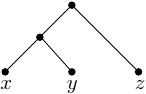}
        \caption{$xy|z$}
    \end{subfigure}%
    ~ 
    \begin{subfigure}{0.22\textwidth}
        \centering
        \includegraphics{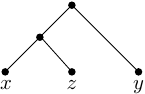}
        \caption{$xz|y$}
    \end{subfigure}%
    ~ 
    \begin{subfigure}{0.22\textwidth}
        \centering
        \includegraphics{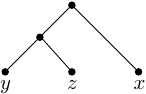}
        \caption{$yz|x$}
    \end{subfigure}%
    ~
     \begin{subfigure}{0.22\textwidth}
        \centering
        \includegraphics{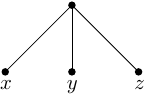}
        \caption{$xyz$}
    \end{subfigure}%
    \caption{All possible topologies of a triplet with leaves $x$, $y$, and $z$.}
    \label{fig:tripletTopologies}
\end{figure}

For a given rooted unordered tree $T$ where each leaf has a unique label, a~\emph{triplet} is defined by a set of three leaf labels $x$, $y$, and $z$ and their induced topology in~$T$. The four possible topologies are illustrated in Figure~\ref{fig:tripletTopologies}. The notation~$xy|z$ is used to describe a triplet where the lowest common ancestor of $x$ and $y$ is at a lower depth than the lowest common ancestor of $z$ with either~$x$ or $y$. Note that the triplet $xy|z$ is the same as the triplet $yx|z$ because $T$ is considered to be unordered. Similarly, notation $xyz$ is used to describe a triplet for which every pair of leaves has the same lowest common ancestor. This triplet can only appear if we allow nodes with degree three or larger in~$T$. From here on, when using the word ``tree'' we imply a ``rooted unordered tree''. 

For two given trees $T_{1}$ and $T_{2}$ that are built on $n$ identical leaf labels, the~\emph{triplet distance} $D(T_{1}, T_{2})$ is the number of triplets the leaves of which induce different topologies in $T_{1}$ and $T_{2}$. Let $S(T_{1}, T_{2})$ be the number of~\emph{shared} triplets in the two trees, i.e., leaf triples with identical topologies in the two trees.  We then have the relationship $D(T_{1}, T_{2}) + S(T_{1},T_{2}) = \binom{n}{3}$.

Previous and new results for computing the triplet distance are shown in the table below. Note that the papers \cite{Critchlow,Bansal2011,bmc13,TripletQuartetSoda13,Jansson2015} do not provide an analysis of the algorithms in the cache oblivious model, so here we provide an upper bound. From here on and unless otherwise stated, any asymptotic bound refers to time.

\enlargethispage{\baselineskip}
\begin{table}[ht]
  \centering
  \begin{tabular}{clcccc}
  	\toprule
  	Year & Reference & Time & IOs & Space & Non-Binary Trees\\
  	\midrule
    1996& Critchlow~\ietal~\cite{Critchlow} & $\mathrm{O}(n^2)$&$\mathrm{O}(n^2)$ &$\mathrm{O}(n^2)$&no\\
   2011& Bansal~\ietal~\cite{Bansal2011} & $\mathrm{O}(n^2)$ &$\mathrm{O}(n^2)$&$\mathrm{O}(n^2)$&yes\\
    2013& Sand~\ietal~\cite{bmc13} & $\mathrm{O}(n \log^{2} n)$ &$\mathrm{O}(n \log^2 n$)&$\mathrm{O}(n)$&no\\
    2013&Brodal~\ietal~\cite{TripletQuartetSoda13} & $\mathrm{O}(n \log n$) &$\mathrm{O}(n \log n$)&$\mathrm{O}(n \log n)$&yes\\
    2015&Jansson and Rajaby~\cite{Jansson2015} & $\mathrm{O}(n \log^{3} n)$ &$\mathrm{O}(n \log^{3} n)$ &$\mathrm{O}(n \log n)$&yes\\
    2017&new & $\mathrm{O}(n \log n)$ & $\mathrm{O}(\frac{n}{B} \log_{2} \frac{n}{M})$ &$\mathrm{O}(n)$&yes\\\hline
  \end{tabular}
  \label{tab:table1}
\end{table}

\subparagraph*{Related Work.}

The triplet distance was first suggested as a method of comparing the shapes of trees by Dobson in 1975 \cite{Dobson}. The first non-trivial algorithmic result dates back to~1996, when \mbox{Critchlow \ietal~\cite{Critchlow}} proposed an $\mathrm{O}(n^2)$ algorithm that however works only for binary trees. Bansal~\ietal~\cite{Bansal2011} introduced an~$\mathrm{O}(n^2)$ algorithm that works for general (binary and non-binary) trees. Both of these algorithms use $\mathrm{O}(n^2)$ space. Sand~\ietal~\cite{bmc13} introduced a new $\mathrm{O}(n^2)$ algorithm using only~$\mathrm{O}(n)$ space for the case of binary trees, that they showed how to optimize to reduce the time to~$\mathrm{O}(n\log^2 n)$. This algorithm was also implemented and shown to be the most efficient in practice. Soon after, Brodal~\ietal~\cite{TripletQuartetSoda13} managed to extend the~$\mathrm{O}(n\log^2 n)$ algorithm to work for general trees, and at the same time brought the time down to $\mathrm{O}(n\log n)$ but now with the space increased to~$\mathrm{O}(n \log n)$. The space for binary trees was still $\mathrm{O}(n)$. The algorithms from \cite{bmc13} and \cite{TripletQuartetSoda13} were implemented and added to the library~tqDist~\cite{bioinformatics14}. Interestingly, it was shown in \cite{alenex14} that for binary trees the~$\mathrm{O}(n\log^2 n)$ algorithm had a better practical performance than the~$\mathrm{O}(n\log n)$ algorithm. \mbox{Jansson and Rajaby~\cite{Jansson2015,JR_17}} showed that an even slower theoretically algorithm requiring worst case $\mathrm{O}(n\log^3 n)$ time and~$\mathrm{O}(n \log n)$ space could give the best practical performance, both for binary and non-binary trees. A survey of previous results until~2013~can~be~found~in~\cite{biology13}.

\subparagraph*{Contribution.}

The common main bottleneck with all previous approaches is that the data structures used rely intensively on $\Omega(n \log n)$ random memory accesses. This means that all algorithms are penalized by cache performance and thus do not scale to external memory. 
We address this limitation by proposing new algorithms for computing the triplet distance on binary and non-binary trees, that match the previous best $\mathrm{O}(n\log n)$ time and $\mathrm{O}(n)$ space bounds in the RAM model,  but for the first time also scale to external memory. More specifically, in the cache oblivious model, the total number of I/Os required is $\mathrm{O}(\frac{n}{B} \log_{2} \frac{n}{M})$. The basic idea is to essentially replace the dependency of random access to data structures by scanning contracted versions of the input trees. A careful implementation of the algorithms is shown to achieve the best performance in practice, thus essentially documenting that the theoretical results carry over to practice.

\subparagraph*{Outline of the Paper.}
In Section \ref{sec:prevApproaches} we provide an overview of previous approaches. In Section \ref{sec:binary} we describe the new algorithm for the case where $T_{1}$ and $T_{2}$ are binary trees. In Section \ref{sec:general} we extend the algorithm to also work for general trees. In Section \ref{sec:implementation} we provide some implementation details. Section \ref{sec:experiments} contains our experimental evaluation. Appendix \ref{appendix:experimentFigures} contains more experimental results. Finally, in Section~\ref{sec:conclusion} we provide our concluding remarks.

\section{Previous Approaches}
\label{sec:prevApproaches}
A naive approach would enumerate over all $\binom{n}{3}$ sets of three leaf labels and find for each set whether the induced topologies in $T_{1}$ and $T_{2}$ differ or not, giving an $\mathrm{O}(n^3)$ algorithm. This algorithm does not exploit the fact that the triplets are not completely independent. For example, the triplets $xy|z$ and~$yx|u$ share the leaves $x$ and $y$ and the fact that the lowest common ancestor of $x$ and $y$ is at a lower depth than the lowest common ancestor of $z$ with either $x$ or $y$ and the lowest common ancestor of $u$ with either $x$ or $y$. Dependencies like this can be exploited to count the number of shared triplets faster.

Critchlow \ietal~\cite{Critchlow} exploit the depth of the leaves' ancestors to achieve the first improvement over the naive approach. Bansal \ietal~\cite{Bansal2011} exploit the shared leaves between subtrees and reduce the problem to computing the intersection size (number of shared leaves) of all pairs of subtrees, one from $T_{1}$ and one from $T_{2}$, which can be solved with dynamic programming.

\subparagraph*{The $\mathrm{O}(n^2)$ Algorithm for Binary Trees in \cite{bmc13}.}

The algorithm for binary trees in \cite{bmc13} is the basis for all subsequent improvements \cite{bmc13, TripletQuartetSoda13, Jansson2015}, including ours as well, so we will describe it in more detail here. The dependency that was exploited is the same as in \cite{Bansal2011} but the procedure for counting the shared triplets is completely different. More specifically, each triplet in $T_{1}$ and $T_{2}$, defined by three leaf labels $i$, $j$, and $k$, is implicitly \emph{anchored} in the lowest common ancestor of $i$, $j$, and $k$. For two nodes $u$ in~$T_{1}$ and $v$ in $T_{2}$, let $s(u)$ and $s(v)$ be the set of triplets that are anchored in $u$ and~$v$ respectively. For the number of shared triplets~$S(T_{1},T_{2})$ we then have: $$S(T_{1},T_{2}) = \sum_{u \in T_{1}}\sum_{v \in T_{2}} \lvert {s(u) \cap s(v)} \rvert \;.$$ 

For the algorithm to be $\mathrm{O}(n^2)$ the value  $\lvert {s(u) \cap s(v)} \rvert$ must be computed in $\mathrm{O}(1)$ time. This is achieved by a leaf colouring procedure as follows: Fix an internal node $u$ in $T_{1}$ and color the leaves in the left subtree of $u$ \textit{red}, the leaves in the right subtree of $u$  \textit{blue}, let every other leaf have no color and then transfer this coloring to the leaves in $T_{2}$, i.e., identically labeled leaves get the same color. The triplets anchored at $u$ are exactly the triplets $xy|z$ where~$x$,~$y$ are blue and $z$ is red, or~$x$,~$y$ are red and $z$ is blue. To compute~$\lvert {s(u) \cap s(v)} \rvert$ we do as follows: let $l$ and $r$ be the left and right children of $v$, and let $w_{\texttt{red}}$ and~$w_{\texttt{blue}}$ be the number of red and blue leaves in a subtree rooted at a node~$w$ in $T_{2}$. We then have:
\begin{align}
\centering
\label{eq:eq1}
\lvert {s(u) \cap s(v)} \rvert = \binom{l_{\texttt{red}}}{2}r_{\texttt{blue}} + \binom{l_{\texttt{blue}}}{2}r_{\texttt{red}} + \binom{r_{\texttt{red}}}{2}l_{\texttt{blue}} + \binom{r_{\texttt{blue}}}{2}l_{\texttt{red}}\;.
\end{align}
\subparagraph*{Subquadratic Algorithms.}
To reduce the time, Sand~\ietal~\cite{bmc13} applied the \textit{smaller half trick}, which specifies a depth first order to visit the nodes $u$ of $T_{1}$, so that each leaf in~$T_{1}$  changes color at most $\mathrm{O}(\log n)$ times. To count shared triplets efficiently without scanning~$T_{2}$ completely for each node $u$ in $T_{1}$, the tree~$T_{2}$ is stored in a data structure denoted a \emph{hierarchical decomposition tree} (\emph{HDT}). This HDT of $T_{2}$ maintains for the current visited node $u$ in $T_{1}$, according to (\ref{eq:eq1}) the sum $\sum_{v \in T_{2}} \lvert {s(u) \cap s(v) \rvert}$, so that each leaf color change in $T_{1}$ can be updated efficiently in $T_{2}$. In~\cite{bmc13} the HDT is a binary tree of height $\mathrm{O}(\log n)$ and every update can be done by a leaf to root path traversal in the HDT, which in total gives $\mathrm{O}(n \log^2 n)$ time. In \cite{TripletQuartetSoda13} the HDT is generalized to also handle non-binary trees, each query operates the same, and now due to a contraction scheme of the HDT the total time is reduced to $\mathrm{O}(n \log n)$. Finally, in~\cite{Jansson2015} as an HDT the so called \emph{heavy-light tree decomposition} is used. Note that the only difference between all~$\mathrm{O}(n \operatorname{polylog} n)$ results that are available right~now~is~the~type~of~HDT~used.

In terms of external memory efficiency, every $\mathrm{O}(n \operatorname{polylog} n)$ algorithm performs $\mathrm{\Theta}(n \log n)$ updates to an HDT data structure, which means that for sufficiently large input trees every algorithm requires $\mathrm{\Omega}(n \log n)$ I/Os.

\section{The New Algorithm for Binary Trees}
\label{sec:binary}

In this section, we provide a cache oblivious algorithm that for two binary trees $T_{1}$ and $T_{2}$, built on the same leaf label set of size $n$, computes $D(T_{1}, T_{2})$ using $\mathrm{O}(n \log n)$ time and $\mathrm{O}(n)$ space in the~RAM model, and $\mathrm{O}(\frac{n}{B} \log_{2} \frac{n}{M})$ I/Os in the cache oblivious model.

\subparagraph{Overview.}

We use the $\mathrm{O}(n^2)$ algorithm from Section \ref{sec:prevApproaches} as a basis. The main difference between this algorithm and our new algorithm is in the order that we visit the nodes of~$T_{1}$, and how we process~$T_{2}$ when we count. We propose a new order of visiting the nodes of~$T_{1}$, which is found by applying a hierarchical decomposition on $T_{1}$. Every component in this decomposition corresponds to a connected part of~$T_{1}$ and a contracted version of $T_{2}$. In simple terms, if $\Lambda$ is the set of leaves in a component of $T_{1}$, the contracted version of~$T_{2}$ is a binary tree on $\Lambda$ that preserves the topologies induced by $\Lambda$ in $T_{2}$ and has size $\mathrm{O}(|\Lambda|)$. To count shared triplets, every component of $T_{1}$ has a representative node $u$ that we use to scan the corresponding contracted version of $T_{2}$ in order to find  $\sum_{v \in T_{2}} \lvert {s(u) \cap s(v) \rvert}$. Unlike previous algorithms, we do not store $T_{2}$ in a data structure. We process $T_{2}$ by contracting and counting, both of which can be done by scanning. At the same time, even though we apply a hierarchical decomposition on $T_{1}$, the only reason why we do so, is so we can find the order in which to visit the nodes of $T_{1}$. This means that we do not need to store~$T_{1}$ in a data structure either. Hence, we completely remove the need of data structures (and thereby random memory accesses), and scanning becomes the basic primitive in the algorithm. To make our algorithm I/O efficient, all that remains to be done is to use a proper layout to store the contracted trees in memory, so that every time we scan a tree of size $s$ we~spend~$\mathrm{O}(s/B)$~I/Os.

\subsection{Modified Centroid Decomposition}
\label{sec:mcd}

For a given binary tree $T$ let $|T|$ denote the number of nodes in $T$ (internal nodes and leaves). For a node $u$ in $T$ let $l$ and $r$ be the left and right children of $u$, and $p$ the parent of~$u$. Removing~$u$ from~$T$ partitions $T$ into three (possibly empty) \emph{connected components}~$T_{l}$,~$T_{r}$, and $T_{p}$ containing~$l$, $r$, and~$p$, respectively. A \emph{centroid} is a node~$u$ in $T$ such that $\max\{|T_{l}|, |T_{r}|, |T_{p}|\} \leq |T|/2$. A centroid always exists and can be found by starting from the root of $T$ and iteratively visiting the child with a largest subtree, eventually we will reach a centroid. Finding the size of every subtree and identifying~$u$ takes~$O(|T|)$ time in the RAM model. By recursively finding centroids in each of the three components, we in the end get a ternary tree of centroids, which is called the \emph{centroid~decomposition} of~$T$, denoted~$\mathit{CD}(T)$. We can generate a level of~$\mathit{CD}(T)$ in $\mathrm{O}(|T|)$ time, given the decomposition of~$T$ into components by the previous level. Since $\mathit{CD}(T)$ has at most $1+\log_{2}(|T|)$ levels, the total time required to build $\mathit{CD}(T)$ is~$\mathrm{O}(|T|\log |T|)$, thus we get Lemma \ref{lemma1}.

\begin{lemma}
For any given binary tree $T$ with $n$ leaves, there exists an algorithm that builds~$\mathit{CD}(T)$ using $\mathrm{O}(n \log n)$ time and $\mathrm{O}(n)$ space in the RAM model.
\label{lemma1}
\end{lemma}

\begin{figure}
\captionsetup[subfigure]{justification=centering}
    \centering
    \begin{subfigure}{0.50\textwidth}
        \centering
        \includegraphics[width=0.95\textwidth]{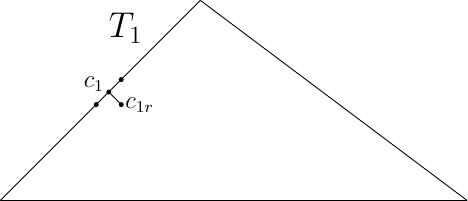}
       \caption{Picking the first centroid $c_{1}$ of $T_{1}$.\\\hspace{\textwidth}}
    \end{subfigure}%
    \begin{subfigure}{0.50\textwidth}
        \centering
        \includegraphics[width=0.95\textwidth]{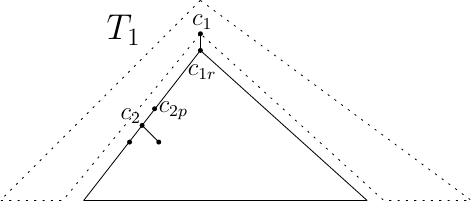}
        \caption{Recursing to component defined by $c_{1r}$ and picking the centroid $c_{2}$ of that component.}
    \end{subfigure}\\
     \begin{subfigure}{0.50\textwidth}
        \centering
        \includegraphics[width=0.95\textwidth]{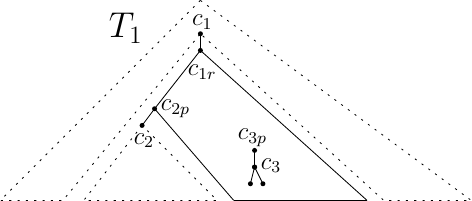}
        \caption{Recursing to component defined by $c_{2p}$ and picking the centroid $c_{3}$ of that component.}
    \end{subfigure}%
     \begin{subfigure}{0.50\textwidth}
        \centering
        \includegraphics[width=0.95\textwidth]{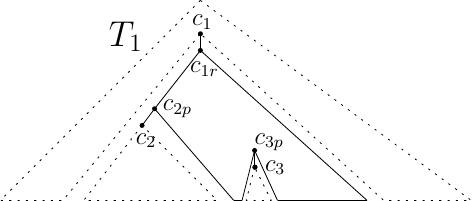}
        \caption{Recursing to component defined by $c_{3p}$.\\\hspace{\textwidth}}
     	\label{fig:CDExampleD}
    \end{subfigure}\\
 
    \caption{Generating a component in $\mathit{CD}(T_{1})$ that has two edges from below. The black polygon is the component.}
    \label{fig:CDExample}
\end{figure}

A component in a centroid decomposition $\mathit{CD}(T)$, might have several edges crossing its boundaries (connecting nodes inside and outside the component). An example of creating a component that has two edges from below can be found in Figure \ref{fig:CDExample}. It is trivial to see that by following the same pattern of generating components as depicted in Figure \ref{fig:CDExampleD}, $\mathit{CD}(T)$ can have a component with an arbitrary number of edges from below.
The below \emph{modified centroid decomposition}, denoted~$\mathit{MCD}(T)$, generates components with at most two edges crossing the boundary, one going towards the root and one down to exactly one subtree.

An $\mathit{MCD}(T)$ is built as follows: The first component is defined by $T$, just like in $\mathit{CD}(T)$. To find recursively the rest of the components, if a component~$C$ has no edge from below, we select the centroid $c$ of $C$ as a splitting node, just like when building $\mathit{CD}(T)$. Otherwise, let~$(x,y)$ be the edge that crosses the boundary from below, where $x$ is in $C$, $y$ is a child of $x$ and $y$ is not in $C$, and let $c$ be the centroid of~$C$ (possibly~$x=c$). As a splitting node choose the lowest common ancestor of~$x$ and $c$ (possibly~$x$ or $c$). By induction every component has at most one edge from below and one edge from above. A useful property of~$\mathit{MCD}(T)$ is captured by the following lemma:

\begin{lemma}
For any given binary tree $T$, we have $h(\mathit{MCD}(T))$ $\leq$ $2 + 2 \log_{2}|T|$, where $h(\mathit{MCD}(T))$ denotes the height of $\mathit{MCD}(T)$.
\label{lemma2}
\end{lemma}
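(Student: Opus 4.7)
The plan is to show that the component size in $\mathit{MCD}(T)$ halves over every two consecutive levels of recursion. Combined with $|T|\le 2n-1$ for a binary tree with $n$ leaves, this will give $h(\mathit{MCD}(T))\le 2\log_2|T|\le 2\log_2(2n) = 2+2\log_2 n$. Throughout I would let $|C|$ denote the number of $T$-nodes in a component $C$, and use the defining centroid property: if $c$ is the centroid of $C$ then every component of $C\setminus\{c\}$ has size at most $|C|/2$.

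I would split the analysis according to the definition of $\mathit{MCD}$. The easy case is when the current component $C$ has no edge from below: then the splitting node equals the centroid $c$, so the children have size at most $|C|/2$ immediately and a single level suffices. The hard case is when $C$ has an edge $(x,y)$ from below with $x\in C$, and the splitting node is $a=\mathrm{LCA}(x,c)$. If $a=c$ we reduce to the easy case, so assume $c$ is a proper descendant of $a$; let $S_c,S_{\bar c}$ be the two subtrees of $a$ inside $C$ (with $c\in S_c$), and $C_p^a$ the component of $C\setminus\{a\}$ lying above $a$. Here I expect the main obstacle — the structural argument — and I would establish two facts. First, $C_p^a\cup\{a\}\cup S_{\bar c}\subseteq T_p^c$, where $T_p^c$ is the parent component of $C\setminus\{c\}$, because removing $c$ alone keeps all of these nodes mutually connected via $a$'s parent path; the centroid property then yields $|C_p^a|,|S_{\bar c}|<|T_p^c|\le|C|/2$. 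Second, $S_c$ has no edge from below, because $a=\mathrm{LCA}(x,c)$ forces either $x=a$ (so the edge disappears when $a$ is removed) or $x$ and $c$ to lie in different subtrees of $a$ (so $x\in S_{\bar c}$); either way the edge from below is not incident to $S_c$.

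The second fact is the crucial one: the potentially large child $S_c$ has no edge from below, so the next $\mathit{MCD}$ step on $S_c$ falls into the easy case and halves $|S_c|$. Together with the first fact (which already bounds the other two children by $|C|/2$ in a single step), this shows that for every component $C$ and every grandchild $C''$ of $C$ in $\mathit{MCD}(T)$, $|C''|\le|C|/2$. A standard doubling argument then bounds the height by $2\log_2|T|$, and using $|T|\le 2n-1$ for a binary tree with $n$ leaves gives $h(\mathit{MCD}(T))\le 2\log_2(2n)=2+2\log_2 n$, as claimed.
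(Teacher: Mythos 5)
Your proof is correct and follows essentially the same route as the paper's: split on whether the component has an edge from below, observe that the two components not containing the centroid are jointly contained in the centroid's parent component (hence each of size at most $|C|/2$), and that the large component containing the centroid is a complete subtree with no edge from below, so it is halved at the very next level, giving the two-levels-per-halving bound. You merely supply slightly more justification than the paper does for the containment argument and for why the $c$-side subtree of the LCA has no edge from below, but the structure and conclusion are identical.
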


\begin{proof}
In~$\mathit{MCD}(T)$ if a component $C$ does not have an edge from below then the centroid of~$C$ is used as a splitting node, thus generating three components~$C_{l}$,~$C_{r}$, and $C_{p}$ such that~$|C_{l}| \leq~\frac{|C|}{2}$, $|C_{r}|\leq \frac{|C|}{2}$, and $|C_{p}|\leq \frac{|C|}{2}$. Otherwise,~$C$ has one edge $(x,y)$ from below, with~$x$ being the node that is part of $C$. Let $c$ be a centroid of $C$. We have to consider the following two cases: if $c$ happens to be the lowest common ancestor of~$c$ and $x$, then our algorithm will split $C$ according to the actual centroid, so we will have that~$|C_{l}| \leq~\frac{|C|}{2}$,~$|C_{r}|\leq~\frac{|C|}{2}$, and~$|C_{p}|\leq \frac{|C|}{2}$. Otherwise, the splitting node will produce components $C_{l}$, $C_{r}$, and $C_{p}$, where $C_{l}$ contains $x$ and $C_{r}$ contains $c$, i.e., we have \mbox{$|C_{l}| + |C_{p}| \leq \frac{|C|}{2}$} and $|C_{r}| \geq \frac{|C|}{2}$. From the first inequality, we have that $|C_{l}| \leq \frac{|C|}{2}$ and~$|C_{p}| \leq \frac{|C|}{2}$.  Notice that~$C_{r}$ is going to be a component corresponding to a complete subtree of $T$, so it will have no edges from below. This means that in the next recursion level when working with~$C_{r}$ the actual centroid of~$C_{r}$ will be chosen as a splitting node, thus in the following recursion level the three components produced from~$C_{r}$ will be such that their sizes are at most half the size of $C$. From the analysis given so far, it becomes clear that when we have a component of size $|C|$ with one edge from below, then we will need at most~2 levels in $\mathit{MCD}(T)$ before producing components all of which will have a guaranteed size of at most $\frac{|C|}{2}$. Hence, the statement~follows.
\end{proof}

Since every level of $\mathit{MCD}(T)$ can be constructed in $\mathrm{O}(|T|)$ time and we have \mbox{$|T| = 2n-1$}, we obtain the following:

\begin{theorem}
For any given binary tree $T$ with $n$ leaves, there exists an algorithm that constructs~$\mathit{MCD}(T)$ using $\mathrm{O}(n \log n)$ time and $\mathrm{O}(n)$ space in the~RAM model.
\label{theorem1}
\end{theorem}

\subsection{The Main Algorithm}
\label{sec:binaryMain}

There is a preprocessing step and a counting (of shared triplets between $T_{1}$ and $T_{2}$) step.

 In the preprocessing step, first we apply a depth first traversal on $T_{1}$ to make $T_{1}$ \emph{left-heavy}, by swapping children so that for every node $u$ in $T_{1}$ the left subtree is larger than the right subtree. Second, we change the leaf labels of~$T_{1}$, which can also be done by a depth first traversal of $T_{1}$, so that the leaves are numbered $1$ to $n$ from left to right. Both steps take~$\mathrm{O}(n)$ time in the RAM model. The second step is performed to simplify the process of transferring the leaf colors between $T_{1}$ and~$T_{2}$. The coloring of a subtree in $T_{1}$ will correspond to assigning the same color to a contiguous range of leaf labels. Determining the color of a leaf in~$T_{2}$ will then require one \texttt{if-statement} to find in what range (red or blue) its label belongs to. Finally, we build $\mathit{MCD}(T_{1})$ according to the description after Lemma~\ref{lemma1}.

In the counting step, we visit the nodes of $T_{1}$, given by the depth first traversal of the ternary tree $\mathit{MCD}(T_{1})$, where the children of every node $u$ in $\mathit{MCD}(T_{1})$ are visited from left to right. For every such node $u$ we compute $\sum_{v \in T_{2}} \lvert {s(u) \cap s(v)} \rvert$. We achieve this by processing $T_{2}$ in two phases, the \emph{contraction} phase and the \emph{counting} phase.

\subparagraph{Contraction Phase of \texorpdfstring{$T_{2}$}{Lg}.} Let $L(T_{2})$ denote the set of leaves in $T_{2}$ and~$\Lambda \subseteq L(T_{2})$. In the contraction phase, $T_{2}$ is compressed into a binary tree of size $\mathrm{O}(|\Lambda|)$ whose leaf set is $\Lambda$.  The contraction is done in a way so that all the topologies induced by $\Lambda$ in $T_{2}$ are preserved in the compressed binary tree. This is achieved by the following three sequential steps:
\begin{itemize}
\item Prune all leaves of $T_{2}$ that are not in $\Lambda$,
\item Repeatedly prune all internal nodes of $T_{2}$ with no children, and
\item Repeatedly contract unary internal nodes, i.e., nodes having exactly one~child.
\end{itemize}

Let $u$ be a node of $\mathit{MCD}(T_{1})$ and $C_{u}$ the corresponding component of $T_{1}$. For every such node~$u$ we have a contracted version of $T_{2}$, from now on referred to as $T_{2}(u)$, where $L(T_{2}(u))=L(C_{u})$. The goal is to augment $T_{2}(u)$ with counters (see counting phase below), so that we can find~$\sum_{v \in T_{2}} \lvert {s(u) \cap s(v) \rvert}$ by scanning $T_{2}(u)$ instead of $T_{2}$. One can imagine~$\mathit{MCD}(T_{1})$ as being a tree where each node $u$ is augmented with $T_{2}(u)$. To generate all contractions of~$T_{2}$ for level~$i$ of $\mathit{MCD}(T_{1})$, which correspond to a set of disjoint connected components in $T_{1}$, we can reuse the contractions of $T_{2}$ at level~$i-1$ in~$\mathit{MCD}(T_{1})$. This means that we can generate the contractions of level~$i$ in $\mathrm{O}(n)$ time, thus we can generate all contractions of $T_{2}$ in $\mathrm{O}(n \log n)$ time. Note that by explicitly storing all contractions, we will also need to use $\mathrm{O}(n \log n)$ space. For our problem, because we traverse~$\mathit{MCD}(T_{1})$ in a depth first manner, we only need to store the contractions corresponding to the stack of nodes of~$\mathit{MCD}(T_{1})$ that we have to remember during the traversal of $\mathit{MCD}(T_{1})$. Since the components at every second level of $\mathit{MCD}(T_{1})$ have at most half the size of the components two levels above, Lemma \ref{lemma3} states that the size of this stack is always $\mathrm{O}(n)$.

\begin{lemma}
Let $T_{1}$ and $T_{2}$ be two binary trees with $n$ leaves and $u_{1}, u_{2},\dots,u_{k}$ a root to leaf path of $\mathit{MCD}(T_{1})$. For the sizes of the corresponding contracted versions $T_{2}(u_{1})$, $T_{2}(u_{2})$,~$\dots$,~$T_{2}(u_{k})$ we have that~$\sum_{i=1}^{k} \lvert T_{2}(u_{i}) \rvert = \mathrm{O}(n)$.
\label{lemma3}
\end{lemma}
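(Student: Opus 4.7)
The plan is to split the bound into two parts: first relate $|T_2(u_i)|$ to the number of leaves of the component $C_{u_i}$, then exploit the geometric decrease of component sizes along a root-to-leaf path of $\mathit{MCD}(T_1)$ to bound the sum.

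\textbf{Step 1: Size of a contraction.} After the three contraction steps (prune non-$\Lambda$ leafs, prune childless internal nodes, contract unary nodes), the tree $T_2(u_i)$ is a rooted binary tree whose leaf set equals $L(C_{u_i})$ and in which every internal node has exactly two children. Any such tree has at most $2|L(C_{u_i})| - 1$ nodes, so $|T_2(u_i)| = O(|L(C_{u_i})|)$. Since $C_{u_i}$ is itself a connected subgraph of the binary tree $T_1$, the number of its leafs is proportional to its total size, giving $|T_2(u_i)| = O(|C_{u_i}|)$.

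\textbf{Step 2: Geometric decrease along the path.} The key tool is the halving behaviour proved inside Lemma~\ref{lemma2}: whenever we split a component $C$ in $\mathit{MCD}(T_1)$, each child component has size at most $|C|/2$, except possibly when $C$ has an edge from below and the centroid is not the LCA of $c$ and $x$; in that exceptional case one child $C_r$ can be large, but $C_r$ has no edge from below, so after one further $\mathit{MCD}$ level all of its children have size at most $|C|/2$. Therefore along any root-to-leaf path $u_1,\dots,u_k$, we have $|C_{u_{i+2}}| \leq |C_{u_i}|/2$, and iterating gives $|C_{u_i}| \leq n / 2^{\lfloor (i-1)/2 \rfloor}$.

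\textbf{Step 3: Summation.} Combining the two bounds,
\[
\sum_{i=1}^{k} |T_2(u_i)| \;=\; O\!\left( \sum_{i=1}^{k} |C_{u_i}| \right) \;=\; O\!\left( \sum_{i=1}^{k} \frac{n}{2^{\lfloor (i-1)/2 \rfloor}} \right) \;=\; O\!\left( n \sum_{j=0}^{\infty} \frac{2}{2^{j}} \right) \;=\; O(n).
\]
Note that Lemma~\ref{lemma2} also bounds $k$ by $2 + 2\log_2 n$, but we do not need to invoke it since the geometric series already converges.

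The only delicate point I anticipate is justifying that $|T_2(u_i)|$ is truly linear in $|L(C_{u_i})|$: one must be careful that a downward boundary edge of $C_{u_i}$ contributes at most one ``boundary leaf'' to $L(C_{u_i})$, so that the contracted $T_2(u_i)$ remains binary and of linear size. Given the $\mathit{MCD}$ invariant that every component has at most one edge from below, this is immediate, and the rest of the argument is a routine geometric sum.
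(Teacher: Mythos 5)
Your proof is correct and follows essentially the same route as the paper's: both arguments bound $|T_2(u_i)|$ linearly in the size of the component $C_{u_i}$ and then invoke the two-level halving property established in the proof of Lemma~\ref{lemma2} to sum a geometric series. Your Step~1 merely makes explicit the relation $|T_2(u_i)| \leq 2|L(C_{u_i})|-1 = \mathrm{O}(|C_{u_i}|)$ that the paper uses implicitly.
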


\begin{proof}
For the root $u_{1}$ we have $T_{2}(u_{1}) = T_{2}$, thus $|T_{2}(u_{1})| \leq 2n$. From the proof of Lemma~\ref{lemma2} we have that for every component of size $x$, we need at most two levels in $\mathit{MCD}(T_{1})$ before producing components all of which have a size of at most $\frac{x}{2}$. This means that $\sum_{i=1}^{k} \lvert T_{2}(u_{i}) \rvert \leq 2n + 2n + \frac{2n}{2} + \frac{2n}{2} + \frac{2n}{4} + \frac{2n}{4} + \cdots + \frac{2n}{2^{i}} + \frac{2n}{2^{i}} + \cdots = 2\sum_{j=0}^{\infty}\frac{2n}{2^{j}} \leq 8n = \mathrm{O}(n)$.
\end{proof}

\subparagraph{Counting Phase of \texorpdfstring{$T_{2}$}{Lg}.} In the counting phase, we find the value of $\sum_{v \in T_{2}} \lvert {s(u) \cap s(v) \rvert}$ by scanning $T_{2}(u)$ instead of $T_{2}$. This makes the total time of the algorithm in the RAM model~$\mathrm{O}(n \log n)$, with the space being $\mathrm{O}(n)$ because of Lemma \ref{lemma3}. We consider the following two~cases:

\begin{itemize}[listparindent=\parindent]

\item \textit{$C_{u}$ has no edges from below.}

In this case $C_{u}$ corresponds to a complete subtree of $T_{1}$. We act exactly like in the~$\mathrm{O}(n^2)$ algorithm (Section \ref{sec:prevApproaches}) but now instead of scanning $T_{2}$ we scan $T_{2}(u)$.

\item \textit{$C_{u}$ has one edge from below.} 

\begin{figure}[ht]
\captionsetup[subfigure]{justification=centering}
    \centering
    \begin{subfigure}{0.33\textwidth}
        \centering
        \includegraphics[width=0.95\textwidth]{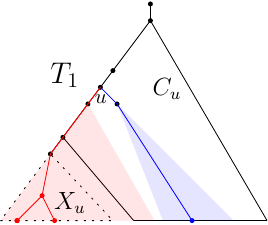}
    \end{subfigure}%
    \begin{subfigure}{0.33\textwidth}
        \centering
        \includegraphics[width=0.95\textwidth]{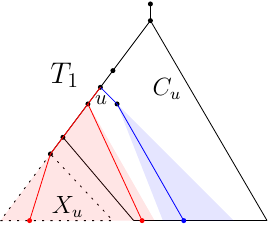}
    \end{subfigure}%
     \begin{subfigure}{0.33\textwidth}
        \centering
        \includegraphics[width=0.95\textwidth]{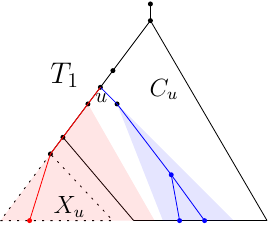}
    \end{subfigure}
    \caption{$\mathit{MCD}(T_{1}$): Triplets (red and blue) that can be anchored in $u$ with the leaves not being in the component~$C_{u}$.}
    \label{fig:tripletsWithXu}
\end{figure}

\begin{figure}[ht]
\captionsetup[subfigure]{justification=centering}
    \centering
    \begin{subfigure}{0.50\textwidth}
        \centering
        \includegraphics[width=1\textwidth]{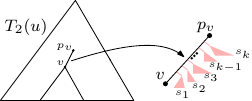}
    \end{subfigure}%
    \begin{subfigure}{0.50\textwidth}
        \centering
        \includegraphics[width=1\textwidth]{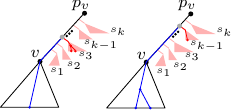}
    \end{subfigure}%
    \caption{Contracted subtrees on edges in $T_{2}(u)$ and shared triplets rooted at contracted nodes.}
    \label{fig:T2uEdges}
\end{figure}

In this case $C_{u}$ does not correspond to a complete subtree of $T_{1}$, since the edge from below~$C_{u}$, points to a subtree $X_{u}$, that is located outside of $C_{u}$ (see Figure \ref{fig:tripletsWithXu}). Note that because in the preprocessing step $T_{1}$ was made to be left-heavy, $X_{u}$ is always rooted at a node on the leftmost path from $u$. The leaves in $X_{u}$ are important because they can be used to form triplets that are anchored in $u$. Acting in the exact same manner as in the previous case is not sufficient because we need to count these triplets as well.

To address this problem, every edge $(p_{v}, v)$ in $T_{2}(u)$ between a node~$v$ and its parent~$p_{v}$, is augmented with some counters about the leaves from~$X_{u}$ that were contracted away in~$T_{2}$. If~$v$ is the root of $T_{2}(u)$, we add an extra edge to store this information. For every such edge~$(p_{v},v)$, let $s_{1},s_{2},\dots,s_{k}$ be the contracted subtrees rooted at the edge (see Figure~\ref{fig:T2uEdges}). Every such subtree contains either leaves with no color or leaves from $X_{u}$ that have the color red (the color cannot be blue because $T_{1}$ was made to be left-heavy). For every node $v$ in $T_{2}(u)$ the counters that we have are the following:

\begin{itemize}
\item $v_{\texttt{red}}$: total number of red leaves in the subtree of $v$ (including those coming from~$X_{u}$).
\item $v_{\texttt{blue}}$: total number of blue leaves in the subtree of $v$.
\item $v_{ts}$: total number of red leaves in $s_{1},s_{2},\dots,s_{k}$.
\item $v_{ps}$: total number of pairs of red leaves in $s_{1},s_{2},\dots,s_{k}$ such that each pair comes from the same contracted subtree, i.e., $\sum_{i=1}^{k} \binom{r_{i}}{2}$ where $r_{i}$ is the number of red leaves in~$s_{i}$.
\end{itemize}

The number of shared triplets that are anchored in a non-contracted node $v$ of $T_{2}(v)$ can be found like in the $\mathrm{O}(n^2)$ algorithm using the counters $v_{\texttt{red}}$ and $v_{\texttt{blue}}$ in (\ref{eq:eq1}). As for the number of shared triplets that are anchored in a contracted node on edge $(p_{v}, v)$, this value is exactly~$\binom{v.\texttt{blue}}{2} \cdot v_{ts} + v_{\texttt{blue}} \cdot v_{ps}$.

\end{itemize}

\subsection{Scaling to External Memory.} 
\label{sec:iobinary}

We now describe how to make the algorithm scale to external memory. The tree $T_{1}$ is stored in an array of size $2n-1$ by following a preorder layout, i.e., if a node $w$ of $T_{1}$ is stored in position $p$, the left child of~$w$ is stored in position~$p+1$ and if $x$ is the size of the left subtree of~$w$, the right child of~$w$ is stored in position $p + x + 1$. The components of $T_{1}$ are connected parts of~$T_{1}$, so they can be identified in $T_{1}$ without having to make a unique copy for each one of them. For $T_{2}$ and its contractions, we use the proof of Lemma~\ref{lemma3} to initialize a large enough array that can fit~$T_{2}$ and every contraction of $T_{2}$ that we need to remember while traversing $\mathit{MCD}(T_{1})$. This array is used as a stack that we use to push and pop the contractions of $T_{2}$. The tree $T_{2}$ and its contractions are stored in memory following a post order layout, i.e., if a node $w$ is stored in position $p$ and $y$ is the size of the right subtree of $w$, the left child of $w$ is stored in position~$p-y-1$ and the right child of~$w$ is stored in position $p-1$.

In the preprocessing step, $T_{1}$ can be made left-heavy with two depth first traversals. The first traversal computes for every node $u$ in $T_{1}$ the size of the subtree rooted at $u$. The second traversal starts from the root of $T_{1}$, recursively visits the children by first visiting a largest child, and prints all nodes visited along the way to an output array. This output array will at the end of the traversal contain the left-heavy version of $T_{1}$ in a preorder layout. From the following Lemma \ref{lemma:ioproof1} we have that both the first and second depth first traversal of~$T_{1}$ require~$\mathrm{O}(n/B)$ I/Os in the cache oblivious model, i.e., making $T_{1}$ left-heavy requires~$\mathrm{O}(n/B)$~I/Os in the cache oblivious model.

In Lemma \ref{lemma:ioproof1} we consider the I/Os required to apply a depth first traversal on a binary tree $T$ that is stored in memory following a local layout, i.e., the nodes of every subtree of~$T$ are stored consecutively in memory and every node has $\mathrm{O}(1)$ occurrences in memory. From here on, when we refer to an edge $(u,v)$, we imply that $u$ is the parent of $v$ in~$T$. During a depth first traversal of~$T$, an edge $(u,v)$ is either \emph{processed} to discover~$v$ or to backtrack from $v$ to $u$. In any case, w.l.o.g.\ we assume that when an edge is processed, both $u$ and $v$ are visited, i.e., both $u$ and $v$ are accessed in memory.

\begin{lemma}
\label{lemma:ioproof1}
Let~$T$ be a binary tree with~$n$ leaves that is stored in an array following a local layout,~i.e., the nodes of every subtree of~$T$ are stored consecutively in memory and every node has $\mathrm{O}(1)$ occurrences in memory. Any depth first traversal that starts from the root of~$T$, and in which for every internal node~$u$ in~$T$ the children of~$u$ are discovered in any order, requires~$\mathrm{O}(n/B)$ I/Os in the cache oblivious~model.\end{lemma}

\begin{proof}
\begin{figure}[ht]
    \centering
        \includegraphics{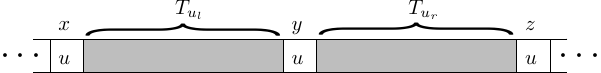}
   \caption{Position of a node $u$ in memory with respect to the two children subtrees of $u$.}
    \label{fig:1}
\end{figure}

For a node $u$ in $T$, let $T_u$ denote the set of nodes in the subtree defined by $u$. From here on, $T_u$ will be referred to as a \emph{subtree of $T$}. Let $u_{l}$ and~$u_{r}$ be the two children of $u$. In Figure \ref{fig:1} we illustrate the three possibilities for the position of~$u$ in memory with respect to~$T_{u_l}$ and~$T_{u_r}$. W.l.o.g.\ and to simplify the presentation of the proof, in our analysis we assume that~$u$ is stored in all these three possible positions, denoted~$x$,~$y$, and~$z$. This assumption is w.l.o.g.\ because in any local layout one or more of these positions is used, thus the number of I/Os is upper bounded by the number of I/Os incurred if we follow our assumption. This placement of $u$ in memory implies that when~$u$ is visited in a depth first traversal of $T$, all the three copies of~$u$ are accessed in memory. Note that according to the definition of a local layout,~$T_{u_l}$ and~$T_{u_r}$ can be interchanged in Figure~\ref{fig:1}. In the following, the~aim~is~to~bound~the~number~of~I/Os~implied.

\begin{figure}
\captionsetup[subfigure]{justification=centering}
    \centering
    \begin{subfigure}{0.6\textwidth}
        \centering
        \includegraphics[width=1\textwidth]{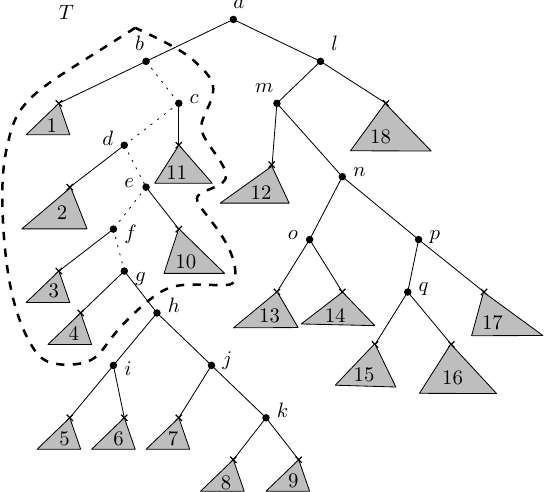}
        \caption{}
        \label{fig:2a}
    \end{subfigure}\hspace{10mm}
    \begin{subfigure}{0.25\textwidth}
        \centering
        \includegraphics[width=1\textwidth]{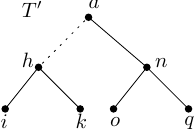}
        \caption{}
        \label{fig:2b}
    \end{subfigure}\\\par\bigskip\par\bigskip
     \begin{subfigure}{1\textwidth}
        \centering
        \includegraphics[width=1\textwidth]{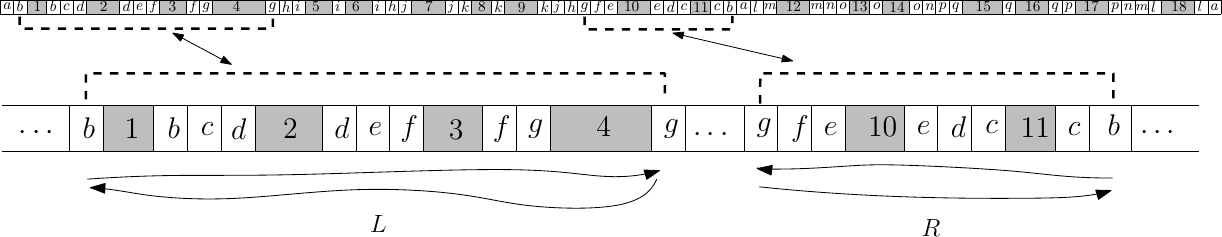}
        \caption{}
        \label{fig:2c}
    \end{subfigure}%
    \caption{(a)~A tree $T$. The gray subtrees are $B$-light subtrees and every node not in a~$B$-light subtree is a $B$-heavy node. ~(b)~The corresponding tree~$T'$ according to the proof of Lemma~\ref{lemma:ioproof1}.~(c)~How~$T$ is stored in memory, the two segments of memory (in dashed lines) that correspond to the edge~$(a,h)$ in $T'$ and how the nodes in~$P_{(a,h)}$ are visited (defined by the one directional lines) during a depth first traversal of $T$.}
    \label{fig:2}
\end{figure}

Define a node $u$ in $T$ to be \emph{$B$-light} if $3|T_u| \leq B-2$, otherwise the node is said to be~\mbox{\emph{$B$-heavy}}. Observe that the children of a $B$-light node are all~$B$-light. We consider the following disjoint sets of nodes from $T$:
\enlargethispage{\baselineskip} 
\begin{enumerate}
\item[$S_{1}$:] Every $B$-light node
\item[$S_{2}$:] Every $B$-heavy node with only $B$-light children
\item[$S_{3}$:] Every $B$-heavy node with two $B$-heavy children
\item[$S_{4}$:] Every $B$-heavy node with one $B$-heavy child and one $B$-light child.
\end{enumerate}

For a $B$-light node~$u$ in~$T$, let~$w$ be the first $B$-heavy node we reach in the path from~$u$ to the root of $T$. An I/O incurred by visiting the node $u$ in $T$ is charged to $w$. This node $w$ can be either in~$S_{2}$ or $S_{4}$. Let $w'$ be the child of~$w$ such that $T_{w'}$ contains $u$. Since~$3|T_{w'}| \leq B-2$, at most 1 I/O is sufficient to visit all nodes in $T_{w'}$. We say that $T_{w'}$ is a subtree that is~$B$-light. In Figure \ref{fig:2a} we have an example of a tree, where the gray subtrees denote~$B$-light~subtrees.

  We now argue that $|S_{2}| = \mathrm{O}(n/B)$ and $|S_{3}| = \mathrm{O}(n/B)$. Let $T'$ be the binary tree created by pruning every $B$-light node from $T$ and their incident edges, and subsequently contracting nodes with in-degree of 1 and out-degree of 1. An example for $T$ and the corresponding tree~$T'$ can be found in Figures~\ref{fig:2a} and \ref{fig:2b}. Let~$l_{1},l_{2},\dots,l_{k}$ be the leaves of~$T'$ and~$T_{l_{1}}, \dots, T_{l_{k}}$ the corresponding subtrees in $T$. Since all these subtrees are disjoint and for every~$1\leq i \leq k$ we have $|T_{l_{i}}| > \frac{B-2}{3} $, for the total number of leaves $x$ in $T'$ we have~$x \leq 3|T|/(B-2)$. Hence, we have $|S_{2}| = x = \mathrm{O}(n/B)$. By construction $T'$ is a binary tree, thus we have that~\mbox{$|T'| \leq 2x \leq 6|T|/(B-2) = \mathrm{O}(n/B)$}. Since the nodes in $S_{3}$ correspond to internal nodes in $T'$, we have $|S_{3}| = \mathrm{O}(n/B)$.

We now argue that the total number of I/Os incurred by the nodes in $S_{4}$ is $\mathrm{O}(n/B)$, thus proving the statement. Let $(u,v)$ be an edge in $T'$. This edge corresponds to a unique path, denoted~$P_{(u,v)}$ in~$T$ that contains every $B$-heavy node, except~$u$ and $v$, that is in the path from~$u$ to~$v$. For example the edge~$(a,h)$ in Figure~\ref{fig:2b} corresponds to~\mbox{$P_{(a,h)} = b \rightarrow c \rightarrow d \rightarrow e \rightarrow f \rightarrow g$}. Let $C_{(u,v)}$ contain all $B$-light and $B$-heavy nodes, except $u$ and $v$, rooted at the path from $u$ to $v$ in $T$. By the local layout followed to store~$T$ in memory, the nodes in $C_{(u,v)}$ are stored in two segments of memory (e.g., see Figure~\ref{fig:2c}). Let~$L$ be the left segment and $R$ the right segment. During a depth first traversal of~$T$, visiting all nodes in $P_{(u,v)}$ corresponds to visiting~$L$ from left to right and then from right to left, and visiting~$R$ from right to left and then from left to right. Since each of the~$B$-light subtrees in~$L$ and $R$ use at most~$B-2$ positions in memory, by accessing all three copies of a node~$w$ in~$P_{(u,v)}$ every time $w$ is visited in a depth first traversal of $T$, we guarantee that the corresponding $B$-light subtree rooted at~$w$ is in cache, i.e., it can be accessed in memory for free. Hence, the total number of I/Os that are sufficient to pay for traversing all nodes in~$C_{(u,v)}$ is~$4+\lceil 3|C_{(u,v)}|/B \rceil$, where the~$+4$ comes from the~4~I/Os we need to pay (in the worst case) to visit the first and last node of $L$ and~$R$. In total, the total number of~I/Os we need to spend for all paths of $T$ that correspond to edges of~$T'$ is~$\sum_{(u,v) \in T'}^{}(4+\lceil 3|C_{(u,v)}|/B \rceil) = \mathrm{O}(n/B)$. Together with the fact that for every node of~$T$ that corresponds to a node of~$T'$ we only spend $\mathrm{O}(1)$ I/Os and there are $\mathrm{O}(n/B)$ such nodes, the statement~follows.
\end{proof}

Changing the labels of $T_{1}$ can be done in $\mathrm{O}(\frac{n}{B}\log_{2} \frac{n}{M})$ I/Os with a cache oblivious sorting routine, e.g., with merge sort. Overall, the preprocessing step requires $\mathrm{O}(\frac{n}{B}\log_{2} \frac{n}{M})$ I/Os. 

When building $\mathit{MCD}(T_{1})$, by scanning the leftmost path that starts from the root of a component~$C_{u}$, we can find the splitting node of $C_{u}$ in {$1 + \lceil |C_{u}|/B \rceil$ I/Os. In $T_{2}(u)$ we spend $1 + \Theta(\lceil |T_{u}|/B \rceil)$ I/Os for the contraction and counting phase. Since $|T_{2}(u)| = \Theta(|C_{u}|)$, overall for a given~$(C_{u},T_{2}(u))$ pair the algorithm requires~$2 + \Theta(\lceil |C_{u}|/B \rceil)$ I/Os. However, after $\mathrm{O}(\log_{2} \frac{n}{M})$ levels in $\mathit{MCD}(T_{1})$, any~$(C_{u},T_{2}(u))$ pair will fit in a cache of size $M$. All such pairs together incur $\mathrm{O}(n/B)$ I/Os. By using a stack to store the contractions of $T_{2}$, the remaining pairs incur $\mathrm{O}(\frac{n}{B}\log_{2} \frac{n}{M})$ I/Os. Overall, the algorithm requires~$\mathrm{O}(\frac{n}{B}\log_{2} \frac{n}{M})$~I/Os in the cache oblivious model.

\section{The New Algorithm for General Trees}
\label{sec:general}

Unlike a binary tree, a general tree can have internal nodes with an arbitrary number of children. By anchoring the triplets of $T_{1}$ and $T_{2}$ in edges instead of nodes, we show that with only four colors we can count all the shared triplets between the two trees. We start by describing a new~$\mathrm{O}(n^2)$ algorithm for general trees. We then show how we can use the same ideas presented in the previous section to extend the $\mathrm{O}(n^2)$ algorithm and reduce~the~time~to~$\mathrm{O}(n \log n)$.

\subsection{Quadratic Algorithm}
\label{sec:quadraticGeneral}

For a given tree $T$, let $t$ be a triplet with leaves $i$, $j$, and $k$ that is either a resolved triplet~$ij|k$  or an unresolved triplet $ijk$, where $i$ is to the left of $j$ and for the triplet $ijk$, $k$ is also to the right of $j$. Let~$w$ be the lowest common ancestor of $i$ and $j$ and $(w,c)$ the edge from $w$ to the child $c$ whose subtree contains $j$. We anchor $t$ in edge $(w,c)$. Let $s'(w,c)$ be the set containing all triplets anchored in edge $(w,c)$. For the number of shared triplets $S(T_{1},T_{2})$ we have:
$$S(T_{1},T_{2}) = \sum_{(u,c) \in T_{1}}\sum_{(v,c') \in T_{2}} \lvert {s'(u,c) \cap s'(v,c')} \rvert \;.$$ 
For the efficient computation of $S(T_{1},T_{2})$ we use the following coloring procedure: Fix a node $u$ in~$T_{1}$ and a child $c$. Color the leaves of every child subtree of $u$ to the left of $c$ red, the leaves of the child subtree defined by~$c$ blue, the leaves of every child subtree to the right of $c$ green and give the color black to every other leaf of $T_{1}$. We then transfer this coloring to the leaves of~$T_{2}$. For the resolved triplet~$ij|k$, $i$ corresponds to the red color, $j$ corresponds to the blue color and $k$ corresponds to the black color. For the unresolved triplet~$ijk$, $i$ corresponds to the red color, $j$ corresponds to the blue color and~$k$ corresponds to the green color.

Suppose that the node $v$ in $T_{2}$ has $k$ children. We are going to compute all shared triplets that are anchored in the $k$ children edges of $v$ in $\mathrm{O}(k)$ time. This will give an $\mathrm{O}(n^2)$ total running time, because for every edge in $T_{1}$ we spend $\mathrm{O}(n)$ time in $T_{2}$ and there are $\mathrm{O}(n)$ edges in $T_{1}$. In $v$ we have the following counters:

\begin{itemize}
\item $v_{\texttt{red}}$: total number of red leaves in the subtree of $v$.
\item $v_{\texttt{blue}}$: total number of blue leaves in the subtree of $v$.
\item $v_{\texttt{green}}$: total number of green leaves in the subtree of $v$.
\item $\overline{v}_{\texttt{black}}$: total number of black leaves not in the subtree of $v$.
\end{itemize}

While scanning the $k$ children edges of $v$ from left to right, for the child $c'$ that is the~$m$-th child of $v$, we also maintain the following:

\begin{itemize}
\item $a_{\texttt{red}}$: total number red leaves from the first $m-1$ children subtrees.
\item $a_{\texttt{blue}}$: total number blue leaves from the first $m-1$ children subtrees.
\item $a_{\texttt{green}}$: total number of green leaves from the first $m-1$ children subtrees.
\item $p_{\texttt{red,green}}$: total number of pairs of leaves from the first $m-1$ children subtrees, where one is red, the other is green, and they both come from different subtrees.
\item $p_{\texttt{red,blue}}$ : total number of pairs of leaves from the first $m-1$ children subtrees, where one is red, the other is blue, and they both come from different subtrees.
\item $p_{\texttt{blue,green}}$ : total number of pairs of leaves from the first $m-1$ children subtrees, where one is blue, the other is green, and they both come from different subtrees.
\item $t_{\texttt{red,blue,green}}$: total number of leaf triples from the first $m-1$ children subtrees, where one is red, one is blue and one is green, and all three leaves come from different subtrees.
\end{itemize}
Before scanning the children edges of $v$, every variable is initialized to 0. Then for the child~$c'$ every variable is updated in $\mathrm{O}(1)$ time as follows:

\begin{itemize}
\item $a_{\texttt{red}} = a_{\texttt{red}} + c'_{\texttt{red}}$
\item $a_{\texttt{blue}} = a_{\texttt{blue}} + c'_{\texttt{blue}}$
\item $a_{\texttt{green}} = a_{\texttt{green}} + c'_{\texttt{green}}$
\item $p_{\texttt{red,green}} = p_{\texttt{red,green}} +  a_{\texttt{green}} \cdot c'_{\texttt{red}} +  a_{\texttt{red}} \cdot c'_{\texttt{green}}$
\item $p_{\texttt{red,blue}} = p_{\texttt{red,blue}} +  a_{\texttt{blue}} \cdot c'_{\texttt{red}} +  a_{\texttt{red}} \cdot c'_{\texttt{blue}}$
\item $p_{\texttt{blue,green}} = p_{\texttt{blue,green}} +  a_{\texttt{green}} \cdot c'_{\texttt{blue}} +  a_{\texttt{blue}} \cdot c'_{\texttt{green}}$
\item $t_{\texttt{red,blue,green}} = t_{\texttt{red,blue,green}} + p_{\texttt{red,green}} \cdot c'_{\texttt{blue}} + p_{\texttt{red,blue}} \cdot c'_{\texttt{green}} + p_{\texttt{blue,green}} \cdot c'_{\texttt{red}} $
\end{itemize}

After finishing scanning the $k$ children edges of $v$, we can compute the shared triplets that are anchored in every child edge of $v$ as follows: for the total number of shared resolved triplets, denoted~$\texttt{tot}_{\texttt{res}}$, we have that $\texttt{tot}_{\texttt{res}} = p_{\texttt{red,blue}} \cdot \overline{v}_{\texttt{black}}$ and for the total number of shared unresolved triplets, denoted~$\texttt{tot}_{\texttt{unres}}$, we have that $\texttt{tot}_{\texttt{unres}} = t_{\texttt{red,blue,green}}$. We are now ready to describe the $\mathrm{O}(n \log n)$ algorithm.

\subsection{Subquadratic Algorithm}
\label{sec:generalMain}
Similarly to the case of binary trees in Section \ref{sec:binary}, there is a preprocessing step and a counting step. The counting step is divided into two phases, the contraction and counting phase of $T_{2}$.

In the preprocessing step of the algorithm, we start by transforming $T_{1}$ into a binary tree, denoted~$b(T_{1})$. Let $w$ be a node of $T_{1}$ that has exactly $k$ children, where~$k>2$. The $k$ edges that connect $w$ to its children in $T_{1}$ are replaced in $b(T_{1})$ by a so called \emph{orange binary tree}. The root of this binary tree is $w$ and the leaves are the $k$ children of $w$ in $T_{1}$. Every internal node (except the root) and edge is colored orange, hence the given name. We assume that node~$w$ and its $k$ children in $T_{1}$, in $b(T_{1})$ have the color \emph{black}. This binary tree is built in a way so that every orange node is on the leftmost path that starts from~$w$, and its leftmost leaf stores the heaviest child of $w$ in $T_{1}$ (i.e., the child whose subtree is  the largest among all other children subtrees of $w$, when transformed in~$b(T_{1})$), thus making $b(T_{1})$ left-heavy. The order in which the other children of $w$ in $T_{1}$ are stored in the remaining leaves does not matter, however for the notation below to be mathematically correct, we assume that after constructing $b(T_{1})$, the left to right order of the children of $w$ in $T_{1}$ is implicitly updated, so that it matches the left to right order in which they appear in the leaves of the orange binary tree below $w$ in~$b(T_{1})$.

Let $u$ be a node in $b(T_{1})$ and $c$ its right child. By construction, $c$ must be a black node. If~$u$ is orange, then let $u_{\texttt{root}}$ be the root of the orange binary tree that $u$ is part of.  If $u$ is black, then let~$u_{\texttt{root}} = u$. Again by construction,~$u_{\texttt{root}}$ must be the parent of $c$ in $T_{1}$. For the edge~$(u,c)$ in $b(T_{1})$, we define~$s''(u,c)$ to be the set of triplets that are anchored in edge~$(u_{root},c)$ of~$T_{1}$, i.e., \mbox{$s''(u,c) = s'(u_{root},c)$}. Note that for an edge $(u',c')$ in $b(T_{1})$ connecting $u'$ with its left child $c'$, we have~$s''(u',c') = 0$.

 For the number of shared triplets we then have:
$$S(T_{1},T_{2}) = \sum_{(u,c) \in b(T_{1})}\sum_{(v,c') \in T_{2}} \lvert {s''(u,c) \cap s'(v,c')} \rvert \;.$$ 

We can capture all triplets in $T_{1}$ by coloring $b(T_{1})$ instead of $T_{1}$. For the nodes $u$ and~$c$ where~$c$ is the right child of $u$, the leaves of $b(T_{1})$ are colored according to edge $(u,c)$ as follows: the leaves in the left subtree of $u$ are colored red, the leaves in the right right subtree of $u$ are colored blue. If $u$ is an orange node, then the black leaves in the remaining subtrees of the orange binary tree that $u$ is part of are colored green. All other leaves of $b(T_{1})$ maintain their color black. 

The reason behind transforming $T_{1}$ into the binary tree $b(T_{1})$, is because now we can use exactly the same core ideas described in Section \ref{sec:binary}. The tree~$b(T_{1})$ is a binary tree, so we apply the same preprocessing step, except we do not need to make it left-heavy because by construction it already is. However, we change the labels of the leaves in $b(T_{1})$ and $T_{2}$, so that the leaves in $b(T_{1})$ are numbered~$1$ to $n$ from left to right. The order in which we visit the nodes of $b(T_{1})$ is determined by a depth first traversal of $\mathit{MCD}(b(T_{1}))$, where the children of every node $u$ in $\mathit{MCD}(b(T_{1}))$ are visited from left to right.

\begin{figure}
\captionsetup[subfigure]{justification=centering}
    \centering
    \begin{subfigure}[b]{0.50\textwidth}
        \centering
        \includegraphics[width=0.95\textwidth]{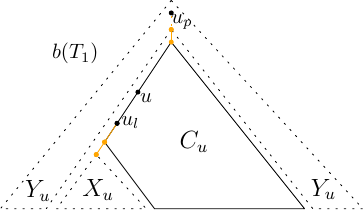}
    \end{subfigure}%
    \begin{subfigure}[b]{0.50\textwidth}
        \centering
        \includegraphics[width=0.95\textwidth]{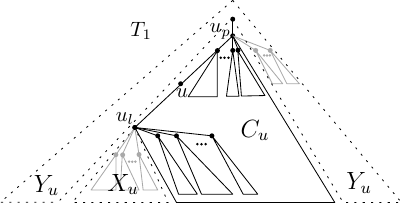}
    \end{subfigure}\\
    \caption{How a component in $b(T_{1})$ translates to a component in $T_{1}$.}
    \label{fig:componentsGeneral}
\end{figure}

In Figure \ref{fig:componentsGeneral} we see that a component $C_{u}$ of $b(T_{1})$ structurally looks like a component of~$T_{1}$ in the binary algorithm of Section \ref{sec:binary}. However, the edges crossing the boundary can now be orange edges as well, which in $T_{1}$ translates to more than one consecutive~subtrees.

Like in the case of binary input trees, while traversing  $\mathit{MCD}(b(T_{1}))$ we process $T_{2}$ in two phases, the contraction phase and the counting phase. The only difference after this point between the algorithm for binary trees and the algorithm for general trees, is in the counters that we have to maintain in the contracted versions of $T_{2}$. Otherwise, the same analysis from Section~\ref{sec:binary}~holds.

\subparagraph{Contraction Phase of \texorpdfstring{$T_{2}$}{Lg}.} The contraction of $T_{2}$ with respect to a set of leaves $\Lambda \subseteq L(T_{2})$, happens in the exact same way as described in Section~\ref{sec:binary}, i.e., we start by pruning all leaves of $T_{2}$ that are not in~$\Lambda$, then we prune all internal nodes of $T_{2}$ with no children, and finally, we contract the nodes that have exactly one child. 

Let $u$ be a node of $\mathit{MCD}(b(T_{1}))$ and $C_{u}$ the corresponding component of $b(T_{1})$. For every such node~$u$ we have a contracted version of $T_{2}$, denoted $T_{2}(u)$, where $L(T_{2}(u)) = L(C_{u})$. Like in the binary algorithm of Section \ref{sec:binary}, the goal is to augment $T_{2}(u)$ with counters, so that we can find~$\sum_{(v,c') \in T_{2}} \lvert {s''(u,c) \cap s'(v,c')} \rvert$ by scanning $T_{2}(u)$ instead of $T_{2}$. 

Because of the location where the triplets are anchored, every leaf that was contracted when constructing $T_{2}(u)$ must have a color and be stored in some way. The color of each such leaf depends on the type of the corresponding component that we have in $b(T_{1})$ and the splitting node that is used for that component. For example, in Figure \ref{fig:componentsGeneral} the contracted leaves from $X_{u}$ will have the red color because $b(T_{1})$ is left-heavy. The contracted leaves from the children subtrees of $u_{p}$ in $T_{1}$ can either have the color green or black. If~$u$ in $b(T_{1})$ happens to be orange and part of the orange binary tree that $u_{p}$ is the root of, then the color must be green, otherwise black. Finally, every leaf that is not in the subtree defined by $u_{p}$, and thus is in $Y_{u}$, must have the color black. The way we store this information is described in the counting phase~below.

\begin{figure}
    \centering
    \includegraphics[width=0.7\textwidth]{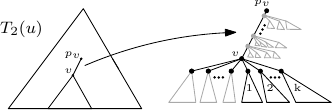}
    \caption{$T_{2}(u)$: Contracted children subtrees rooted at node $v$ and contracted subtrees rooted at contracted nodes (gray color) on the edge $(p_{v},v)$.}
    \label{fig:T2contractedNodesEdgesGeneral}
\end{figure}

\subparagraph{Counting Phase of \texorpdfstring{$T_{2}$}{Lg}.} In Figure \ref{fig:T2contractedNodesEdgesGeneral} we illustrate how a node $v$ in~$T_{2}(u)$ can look like. The contracted subtrees are illustrated with the dark gray color. Every such subtree contains some number of red, green, and black leaves. The counters that we maintain should be so that if~$v$ has~$k$ children in $T_{2}(u)$, then we can count all shared triplets that are anchored in every child edge~(including those of the contracted children subtrees) of $v$ in $\mathrm{O}(k)$ time. At the same time, in~$\mathrm{O}(1)$ time we should be able to count all shared triplets that are anchored in every child edge of every contracted node that lies on the edge $(p_{v}, v)$. Then, the time required by the counting phase becomes~$\mathrm{O}(|T_{2}(u)|)$, giving the same time bounds as in the binary algorithm of Section \ref{sec:binary}. In $v$ we have the following counters:

\begin{itemize}
\item $v_{\texttt{red}}$: total number of red leaves (including the contracted leaves) in the subtree of $v$.
\item $v_{\texttt{blue}}$: total number of blue leaves in the subtree of $v$.
\item $v_{\texttt{green}}$: total number of green leaves (including the contracted leaves) in the subtree of $v$.
\item $\overline{v}_{\texttt{black}}$: total number of black leaves (including the contracted leaves) not in the subtree of $v$.
\end{itemize}

We divide the rest of the counters into two categories. The first category corresponds to the leaves in the contracted children subtrees of $v$ and each counter is stored in a variable of the form~$v_{A.x}$. The second category corresponds to the leaves in the contracted subtrees on the edge~$(p_{v}, v)$, and each counter is stored in a variable of the form $v_{B.x}$. For the first category $A$ we have the following counters:
\begin{itemize}
\item $v_{A.\texttt{red}}$: total number of red leaves in the contracted children subtrees of~$v$.
\item $v_{A.\texttt{green}}$: total number of green leaves in the contracted children subtrees of $v$.
\item $v_{A.\texttt{black}}$: total number of black leaves in the contracted children subtrees of $v$.
\item $v_{A.\texttt{red,green}}$: total number of pairs of leaves where one is red, the other is green, and one leaf comes from one contracted child subtree of $v$ and the other leaf comes from a different contracted child subtree of $v$.
\end{itemize}

While scanning the $k$ children edges of $v$ from left to right, for the child $c'$ that is the~$m$-th child of $v$, we also maintain the following:

\begin{itemize}
\item $a_{\texttt{red}}$: total number of red leaves from the first $m-1$ children subtrees, including the contracted children subtrees.
\item $a_{\texttt{blue}}$: total number of blue leaves from the first $m-1$ children subtrees.
\item $a_{\texttt{green}}$: total number of green leaves from the first $m-1$ children subtrees, including the contracted children subtrees.
\item $p_{\texttt{red,green}}$: total number of pairs of leaves from the first $m-1$ children subtrees, including the contracted children subtrees, where one is red, the other is green, and they both come from different subtrees (one might be contracted and the other non-contracted).
\item $p_{\texttt{red,blue}}$ :  total number of pairs of leaves from the first $m-1$ children subtrees, including the contracted children subtrees, where one is red, the other is blue, and they both come from different subtrees (one might be contracted and the other non-contracted).
\item $p_{\texttt{blue,green}}$ :  total number of pairs of leaves from the first $m-1$ children subtrees, including the contracted children subtrees, where one is blue, the other is green, and they both come from different subtrees (one might be contracted and the other non-contracted).
\item $t_{\texttt{red,blue,green}}$: total number of leaf triples from the first $m-1$ children subtrees, including the contracted children subtrees, where one is red, one is blue and one is green, and all three leaves come from different subtrees (some might be contracted, some might be non-contracted).
\end{itemize}
Every variable is updated in $\mathrm{O}(1)$ time in exactly the same manner like in the $\mathrm{O}(n^2)$ algorithm of Section \ref{sec:quadraticGeneral}. The main difference is in the values of the variables before we begin scanning the children edges of $v$. Every variable is initialized as follows:
\enlargethispage{\baselineskip} 
\begin{itemize}
\item $a_{\texttt{red}} = v_{A.\texttt{red}}$
\item $a_{\texttt{blue}} = 0$
\item $a_{\texttt{green}} = v_{A.\texttt{green}}$
\item $p_{\texttt{red,green}} = v_{A.\texttt{red,green}}$
\item $p_{\texttt{red,blue}} = p_{\texttt{blue,green}} = t_{\texttt{red,blue,green}} = 0$
\end{itemize}

After finishing scanning the $k$ children edges of $v$, we can compute the shared triplets that are anchored in every child edge of $v$ (including the children edges pointing to contracted subtrees) as follows: for the total number of shared resolved triplets, denoted $\texttt{tot}_{\texttt{A.res}}$, we have that $\texttt{tot}_{\texttt{A.res}} = p_{\texttt{red,blue}} \cdot \overline{v}_{\texttt{black}}$ and for the total number of shared unresolved triplets, denoted $\texttt{tot}_{\texttt{A.unres}}$, we have that $\texttt{tot}_{\texttt{unres}} = t_{\texttt{red,blue,green}}$.
\enlargethispage{\baselineskip} 

The second category $B$ of counters help us count triplets involving leaves (contracted and non-contracted) from the subtree of $v$ and leaves from the contracted subtrees rooted at the edge~$(p_{v},v)$. We maintain the following:

\begin{itemize}
\item $v_{B.\texttt{red}}$: total number of red leaves in all contracted subtrees rooted at the edge $(p_{v},v)$.
\item $v_{B.\texttt{green}}$: total number of green leaves in all contracted subtrees rooted at the edge $(p_{v},v)$.
\item $v_{B.\texttt{black}}$: total number of black leaves in all contracted subtrees rooted at the edge $(p_{v},v)$.
\item $v_{B.\texttt{red,green}}$: total number of pairs of leaves where one is red and the other is green such that one leaf comes from a contracted child subtree of a contracted node $v'$ and the other leaf comes from a different contracted child subtree of the same contracted node $v'$.
\item $v_{B.\texttt{red,black}}$: total number of pairs of leaves where one is red and the other is black such that the red leaf comes from a contracted child subtree of a contracted node $v'$ and the black leaf comes from a contracted child subtree of a contracted node $v''$, where $v''$ is closer to $p_{v}$ than~$v'$.
\end{itemize}

For the total number of shared unresolved triplets, denoted $\texttt{tot}_{B.\texttt{unres}}$, that are anchored in the children edges of every contracted node that exists in edge $(p_{v},v)$, we have that $\texttt{tot}_{B.\texttt{unres}} = v_{\texttt{blue}} \cdot v_{B.\texttt{red,green}}$. For the total number of shared resolved triplets, denoted~$\texttt{tot}_{B.\texttt{res}}$, that are anchored in the children edges of every contracted node that exists in edge~$(p_{v},v)$, we have that~$\texttt{tot}_{B.\texttt{res}} = v_{\texttt{blue}} \cdot v_{B.\texttt{red,black}} + v_{\texttt{blue}} \cdot v_{B.\texttt{red}} \cdot (\overline{v}_{\texttt{black}} - v_{B.\texttt{black}})$.

\subsection{Scaling to External Memory.} 
\label{sec:iogeneral}

The analysis is the same as in Section \ref{sec:binary}, except for minor details. The proof of Lemma \ref{lemma3} can be trivially modified to apply to general trees as well. Finally, Lemma \ref{lemma:ioproof1} is generalized to non-binary trees in the following Lemma~\ref{lemma:ioproof2}. In Lemma \ref{lemma:ioproof2}, we consider the I/Os required to apply a depth first traversal on a non-binary tree $T$ that is stored in memory following a local layout, i.e., the nodes of every subtree of~$T$ are stored consecutively in memory and every node has $\mathrm{O}(1)$ occurrences in memory. Similarly to the assumptions we made for Lemma \ref{lemma:ioproof1}, w.l.o.g.\ we assume that when an edge $(u,v)$ of $T$ is processed in a depth first traversal of $T$, both $u$ and $v$ are visited, i.e.,~ both $u$ and $v$ are accessed in memory.

\begin{lemma}
\label{lemma:ioproof2}
Let~$T$ be a non-binary tree with~$n$ leaves that is stored in an array following a local layout, i.e., the nodes of every subtree of~$T$ are stored consecutively in memory and every node has $\mathrm{O}(1)$ occurrences in memory. Any depth first traversal that starts from the root of~$T$ and in which for every internal node~$u$ in~$T$, after the discovery of the first child of~$u$ the remaining children are discovered in order that they appear in memory from left to right, requires $\mathrm{O}(n/B)$ I/Os in the cache oblivious model.\end{lemma}

\begin{proof}

\begin{figure}[ht]
    \centering
        \includegraphics[width=1\textwidth]{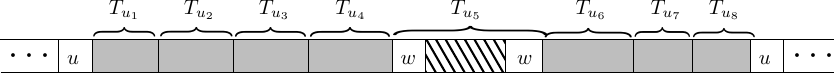}
   \caption{Position of a node $u$ in memory with respect to the 8 subtrees defined by the children of $u$, with $T_{u_{5}}$ being a largest subtree.}
    \label{fig:3}
\end{figure}

This proof can be thought of as an extension of the proof of Lemma~\ref{lemma:ioproof1}. Following the proof of Lemma \ref{lemma:ioproof1}, for a node~$u$ in~$T$, let~$T_u$ denote the set of nodes in the subtree defined by~$u$. For~$i \geq 2$, let~$u_{1},\dots,u_{i}$ be the children of~$u$ and let~$T_{u_{1}}, \dots, T_{u_{i}}$ be the corresponding subtrees. We assume that these subtrees are ordered from left to right in order that they appear in memory. In the proof of Lemma \ref{lemma:ioproof1}, we implicitly assumed that the positions of the two children of~$u$ are stored together with~$u$ in memory. For general trees, together with~$u$ we need to store a list of arbitrary size~$i \geq 2$ containing the positions in memory of every child of~$u$. To avoid complicating the presentation of the proof, we assume that we can find the position in memory of every child of~$u$ without this list, i.e., this list is not stored together with~$u$, thus finding the position of any child of~$u$ incurs no I/Os. An easy way to support this is to store in every node~$u$ in~$T$, one pointer to the first child to be discovered and one pointer to the sibling appearing next in memory. For every node $u$ in $T$, we allow a constant number of occurrences in memory. For any given placement of the copies of $u$ in memory, we add two copies of $u$ before the first child subtree and after the last child subtree. W.l.o.g.\ we assume that~$u$ is only stored before the first child subtree and after the last child subtree (see Figure \ref{fig:3} for an example).
\enlargethispage{\baselineskip} 
\begin{figure}
\captionsetup[subfigure]{justification=centering}
    \centering
    \begin{subfigure}{0.67\textwidth}
        \centering
        \includegraphics[width=1\textwidth]{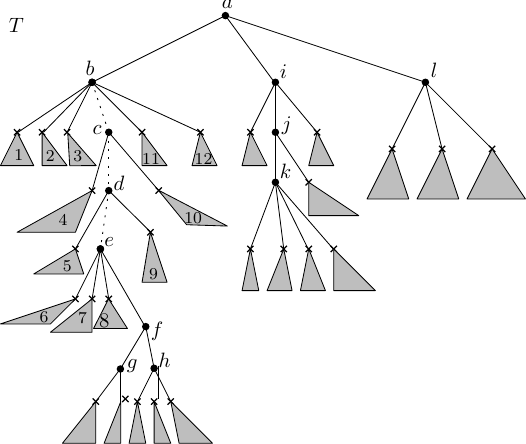}
        \caption{}
        \label{fig:4a}
    \end{subfigure}\hspace{10mm}%
    \begin{subfigure}{0.25\textwidth}
        \centering
        \includegraphics[width=1\textwidth]{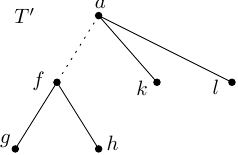}
        \caption{}
        \label{fig:4b}
    \end{subfigure}\\\par\bigskip\par\bigskip
     \begin{subfigure}{1\textwidth}
        \centering
        \includegraphics[width=1\textwidth]{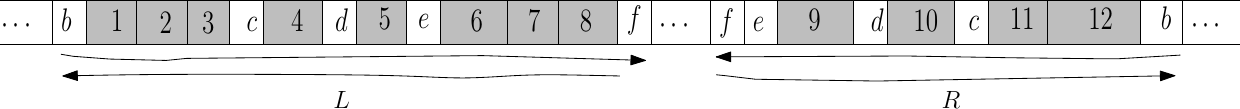}
        \caption{}
        \label{fig:4c}
    \end{subfigure}%
    \caption{(a)~A general tree $T$. The gray subtrees are $B$-light subtrees and every node not in a $B$-light subtree is a $B$-heavy node.~(b)~The corresponding tree T' according to the proof of Lemma~\ref{lemma:ioproof2}.~(c)~How~$T$ is stored in memory and the two segments of memory that correspond to the edge $(a,f)$ in $T'$.}
    \label{fig:4}
\end{figure}

Define a node $u$ in $T$ to be \emph{$B$-light} if $2|T_u| \leq B-2$, otherwise the node is said to be~\mbox{\emph{$B$-heavy}}. Observe that the children of a $B$-light node are all $B$-light. We consider the following disjoint sets of nodes from $T$:

\begin{enumerate}
\item[$S_{1}$:] Every $B$-light node 
\item[$S_{2}$:] Every $B$-heavy node with only $B$-light children
\item[$S_{3}$:] Every $B$-heavy node with at least two $B$-heavy children and an arbitrary number of \mbox{$B$-light} children
\item[$S_{4}$:] Every $B$-heavy node with exactly one $B$-heavy child and at least 1 $B$-light children.
\end{enumerate}

For a $B$-light node $u$ in $T$, let~$w$ be the first $B$-heavy node we reach in the path from~$u$ to the root of $T$. An I/O incurred by visiting the node $u$ in $T$ is charged to $w$. This node~$w$ can be either in~$S_{2}$,~$S_{3}$ or $S_{4}$. Let $w'$ be the child of~$w$ such that $T_{w'}$ contains $u$. Since~$2|T_{w'}| \leq B-2$, at most 1 I/O is sufficient to visit all nodes in $T_{w'}$. We say that $T_{w'}$ is a subtree that is~$B$-light. In Figure \ref{fig:4a} we have an example of a tree, where the gray subtrees denote $B$-light subtrees.

Similarly to the proof of Lemma \ref{lemma:ioproof1}, we have that~$|S_{2}| = \mathrm{O}(n/B)$ and $|S_{3}| = \mathrm{O}(n/B)$. Since~$T$ is non-binary, we have to argue that the number of I/Os spent traversing the~$B$-light subtrees that are rooted at every node in~$S_{2}$ and~$S_{3}$ is~$\mathrm{O}(n/B)$. For a node $u$ in~$T$, let~$G_u$ be the size of all gray subtrees rooted at~$u$. For every node~$u$ in~$S_{2}$ we spend at most~1~I/O to traverse the first chosen child subtree and~\mbox{$1 + |G_u|/B$}~I/Os to traverse the remaining subtrees, thus $2 + |G_u|/B$ I/Os in total. Since~\mbox{$|S_{2}| = \mathrm{O}(n/B)$} and the gray subtrees in~$T$ are disjoint, i.e.,~$\sum_{u \in T} |G_u| = \mathrm{O}(n)$, we spend~$\mathrm{O}(n/B)$~I/Os traversing the $B$-light subtrees rooted at every node in $S_{2}$. For every node~$u$ in~$S_{3}$, let~$d'(u)$ denote the number of $B$-heavy children of $u$. For this node~$u$, we spend at most~$1$~I/O to traverse the first chosen child subtree that could be~$B$-light and~\mbox{$1 + d'(u) + |G_u|/B$}~I/Os to traverse the remaining gray subtrees rooted at~$u$. Since~\mbox{$|S_{3}| = \mathrm{O}(n/B)$}, we have~\mbox{$\sum_{u \in T'} d'(u) = \mathrm{O}(n/B)$}. Together with the fact that~\mbox{$\sum_{u \in T} |G_u| = \mathrm{O}(n)$}, we spend $\mathrm{O}(n/B)$~I/Os traversing the $B$-light subtrees rooted at every node in $S_{3}$.

	We now argue that the total number of I/Os incurred by the nodes in $S_{4}$ is $\mathrm{O}(n/B)$, thus proving the statement. Let $T'$ be defined as in the proof of Lemma \ref{lemma:ioproof1}, as well as~$P_{(u,v)}$ and~$C_{(u,v)}$ for an edge~$(u,v)$ in $T'$. By the local layout followed to store $T$ in memory, the nodes in $C_{(u,v)}$ are stored in two segments of memory (e.g., see Figure~\ref{fig:4c}). Let $w$ be a node in $P_{(u,v)}$ and~$G_w$ be the total size of the gray subtrees rooted at $w$. We say that $w$ is~$G$-light if $2G_w \leq B-2$, otherwise~$G$-heavy. There can be $\mathrm{O}(n/B)$~$G$-heavy nodes in $T$, thus by the same argument as in the previous paragraph, scanning the gray subtrees for all $G$-heavy nodes together incurs~$\mathrm{O}(n/B)$ I/Os. For the~$G$-light nodes we follow a similar argument as in the proof of lemma~\ref{lemma:ioproof1}. Let~$L$ be the left chunk and $R$ the right chunk and w.l.o.g assume that every node in~$P_{(u,v)}$ is~$G$-light. During a depth first traversal of~$T$, visiting all nodes in~$P_{(u,v)}$ corresponds to visiting ~$L$ from left to right and then from right to left, and visiting $R$ from right to left and then from left to right. Let $c$ be the child of $w$ that is~$B$-heavy. Since for every node $w$ in~$P_{(u,v)}$ we have~$2G_w \leq B-2$, by accessing all two copies of $w$ and $c$ when~$c$ is visited in a depth first traversal of $T$, we guarantee that all the gray subtrees rooted at $w$ are in cache i.e., they can be accessed in memory for free. Hence, $\mathrm{O}(n/B)$ I/Os are sufficient to pay to traverse the gray subtrees of all $G$-light nodes. Overall, by having~$M \geq 5B$, where two blocks are used to hold copies of a node~$w$ in~$T$, two blocks are used to hold copies of a child of~$w$ and one block is used to scan gray subtrees, the statement follows.	
\end{proof}

\section{Implementation}
\label{sec:implementation}

\begin{figure}
    \centering
        \includegraphics[width=0.6\textwidth]{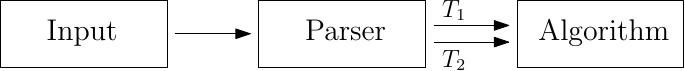}
    \caption{Implementation overview.}
    \label{fig:implOverview}
\end{figure}

The algorithms of Sections \ref{sec:binary} and \ref{sec:general} have been implemented in the C++ programming language. A high level overview of each implementation is illustrated in Figure \ref{fig:implOverview}. The source code is publicly available and can be found at \url{https://github.com/kmampent/CacheTD}.

\subsection{Input}

The two input trees $T_{1}$ and $T_{2}$ are stored in two separate text files following the Newick format. Both trees have $n$ leaves and the label of each leaf is assumed to be a number in~$\{1,2,\dots,n\}$. Two leaves cannot have the same~label.

\subsection{Parser}

The parser receives the files that store $T_{1}$ and $T_{2}$ in Newick format, and returns~$T_{1}$ and $T_{2}$ but now with $T_{1}$ stored in an array following the preorder layout and $T_{2}$ in an array following the postorder layout. The parser takes~$\mathrm{O}(n)$ time and space in the RAM model and $\mathrm{O}{(n/B)}$ I/Os in the cache oblivious~model.

\subsection{Algorithm}

Having $T_{1}$ and $T_{2}$ stored in memory following the desired layouts, we proceed with the main part of the algorithm. Both implementations (binary, general) follow the same approach. There exists an \emph{initialization} step and a \emph{distance computation} step.

\subparagraph*{Initialization.}

In the initialization step, the preprocessing parts of the algorithms are performed (see Sections \ref{sec:binaryMain} and \ref{sec:generalMain}), where the first component of $T_{1}$ is built, and the corresponding contracted version of $T_{2}$, from now on referred to as \emph{corresponding component} of~$T_{2}$, is built as well. After this step, the first component of $T_{1}$ is stored in an array (different than the one produced by the parser) following the preorder layout. Similarly, the corresponding component of $T_{2}$ is stored in an array following the postorder~layout.

\subparagraph*{Distance Computation.}

Let $\texttt{comp}(T_{1})$ and $\texttt{comp}(T_{2})$ be the component of $T_{1}$ and the corresponding component of $T_{2}$ produced by the initialization step. Having these two components available, we can begin counting shared triplets in order to compute $S(T_{1}, T_{2})$. The following steps are recursively applied:

\begin{itemize}
\item Starting from the root of $\texttt{comp}(T_{1})$ and according to Section \ref{sec:mcd}, scan the leftmost path of~$\texttt{comp}(T_{1})$ to find the splitting node $u$.
\item Scan $\texttt{comp}(T_{2})$ to compute for the binary algorithm $\sum_{v \in T_{2}} \lvert {s(u) \cap s(v) \rvert}$ (see counting phase of $T_{2}$ in Section \ref{sec:binaryMain}), or for the general algorithm $\sum_{(v,c') \in T_{2}} \lvert {s''(u,c) \cap s'(v,c')} \rvert$ (see counting phase of $T_{2}$ in Section \ref{sec:generalMain}).
\item Using the splitting node $u$, generate the next three components of $T_{1}$. Let $\texttt{comp}(T_{1}(u_{l}))$, $\texttt{comp}(T_{1}(u_{r}))$, and $\texttt{comp}(T_{1}(u_{p}))$ be the components determined by the left child, right child, and parent of~$u$ respectively. Let $\texttt{comp}(T_{2}(u_{l}))$, $\texttt{comp}(T_{2}(u_{r}))$ and $\texttt{comp}(T_{2}(u_{p}))$ be the corresponding contracted versions of~$T_{2}$ with all the necessary counters  properly maintained (see contraction phase of $T_{2}$ in Section~\ref{sec:binaryMain} for the binary case and in Section \ref{sec:generalMain} for the general case).
\item Scan and contract $\texttt{comp}(T_{2})$ to generate $\texttt{comp}(T_{2}(u_{l}))$ and then recurse on the pair defined by $\texttt{comp}(T_{1}(u_{l}))$ and $\texttt{comp}(T_{2}(u_{l}))$.
\item Scan and contract $\texttt{comp}(T_{2})$ to generate $\texttt{comp}(T_{2}(u_{r}))$ and then recurse on the pair defined by  $\texttt{comp}(T_{1}(u_{r}))$ and $\texttt{comp}(T_{2}(u_{r}))$.
\item Scan and contract $\texttt{comp}(T_{2})$ to generate $\texttt{comp}(T_{2}(u_{p}))$ and then recurse on the pair defined by $\texttt{comp}(T_{1}(u_{p}))$ and $\texttt{comp}(T_{2}(u_{p}))$.
\end{itemize}
As a final step, print $\binom{n}{3} - S(T_{1}, T_{2})$, which is equal to the triplet distance~$D(T_{1},T_{2})$.

\subparagraph*{Correctness.}

The correctness of our implementations was extensively tested by generating hundreds of thousands of random trees of varying size and varying degree and comparing the output of our implementations against the output of the implementations of the $\mathrm{O}(n \log^{3} n)$ algorithm in \cite{Jansson2015} and the~$\mathrm{O}(n \log n)$ algorithm in \cite{bioinformatics14}.

\subparagraph*{Changing the Leaf Labels.} To get the right theory bounds, changing the leaf labels of $T_{1}$ and~$T_{2}$ must be done with a cache oblivious sorting routine, e.g., merge sort. In the RAM model this approach takes $\mathrm{O}(n \log n)$ time and in the cache oblivious model $\mathrm{O}(\frac{n}{B}\log_{2} \frac{n}{M})$~I/Os. A second approach is to exploit the fact that each label is between $1$ and $n$ and use an auxiliary array that stores the new labels of the leaves in $T_{1}$, which we then use to update the leaf labels of $T_{2}$. In the RAM model this second approach takes $\mathrm{O}(n)$ time but in the cache oblivious model $\mathrm{O}(n)$ I/Os. In practice, the problem with the first approach is that the number of instructions it incurs eliminates any advantage that we expect to get due to its cache related efficiency for~$L_{1}$,~$L_{2}$, and~$L_{3}$ cache. For the input sizes tested, the array of labels easily fits into RAM, so in our implementation of both algorithms we~use~the~second~approach. 

\section{Experiments}
\label{sec:experiments}

In this section we provide an extensive experimental evaluation of the practical performance of the algorithms described in Sections \ref{sec:binary} and \ref{sec:general}.

\subsection{The Setup}

The experiments were performed on a machine with 8GB RAM, Intel(R) Core(TM) i5-3470 CPU @ 3.20GHz, 32K L1 cache, 256K L2 cache and 6144K L3 cache. The operating system was Ubuntu 16.04.2 LTS. The compilers used were g++ 5.4 and g++ 4.7, together with cmake 3.5.1. The experiments were performed in text mode, i.e., by booting into the terminal of Ubuntu, to minimize the interference from other programs running at the same time.

\subparagraph*{Generating Random Trees.}

We use two different models for generating input trees. The first model is called the \emph{random model}. A tree $T$ with $n$ leaves in this model is generated as follows: 

\begin{itemize}
\item Create a binary tree $T$ with $n$ leaves as follows: start with a binary tree~$T$ with two leaves. Iteratively pick $n-1$ times a leaf $l$ uniformly at random. Make $l$ an internal node by appending a left child node and a right child node to $l$, thus increasing the number of leaves in $T$ by exactly 1.
\item With probability $p$ contract every internal node $u$ of $T$, i.e., make the children of $u$ be the children of $u$'s parent and remove $u$.
\end{itemize}

The second model is called the \emph{skewed} model. In this model, we can control more directly the shape of the input trees. A tree $T$ with $n$ leaves in this model is generated as follows: 

\begin{itemize}
\item Create a binary tree $T$ with $n$ leaves as follows: let $0 \leq \alpha \leq 1$ be a parameter, $u$ some internal node in $T$, $l$ and $r$ the left and right children of $u$, and $T(u)$, $T(l)$, and $T(r)$ the subtrees rooted at $u$, $l$, and~$r$ respectively. Create $T$ so that for every internal node $u$ we have $\frac{|T(l)|}{|T(u)|} \approx \alpha$, i.e., if $n'$ is the number of leaves below $T(u)$, and $|\Lambda_{l}|$ and~$|\Lambda_{r}|$ are the number of leaves in $T(l)$ and $T(r)$ respectively, first choose \mbox{$|\Lambda_{l}| = \max(1, \min(\floor{\alpha\cdot n'},n'-1))$} and then let $|\Lambda_{r}| = n' - |\Lambda_{l}|$.
\item With probability $p$ contract every internal node $u$ of $T'$ like in the random~model.
\end{itemize}

In both models and after creating $T$, we shuffle the leaf labels by using \texttt{std::shuffle}\footnote{\url{http://www.cplusplus.com/reference/algorithm/shuffle/}} together with \texttt{std::default\_random\_engine}\footnote{\url{http://www.cplusplus.com/reference/random/default_random_engine/}}.

\subparagraph*{Implementations Tested.}

Let $p_{1}$ and $p_{2}$ denote the contraction probability of $T_{1}$ and $T_{2}$ respectively. When $p_{1} = p_{2} = 0$, the trees $T_{1}$ and $T_{2}$ are binary trees, so in the experiments we use the algorithm from Section \ref{sec:binary}. In every other case, the algorithm from Section \ref{sec:general} is used. Note that the algorithm from Section \ref{sec:general} can handle binary trees just fine, however there is an extra overhead (factor 1.8 slower, see Figure \ref{fig:figOverhead}) compared to the algorithm from Section \ref{sec:binary} that comes due to the additional counters that we maintain in the contractions of~$T_{2}$.

We compared our implementation with the implementations provided in \cite{Jansson2015} and \cite{bmc13, TripletQuartetSoda13}, and are available at \url{http://sunflower.kuicr.kyoto-u.ac.jp/~jj/Software/Software.html} and \url{http://users-cs.au.dk/cstorm/software/tqdist/} respectively. The implementation of the algorithm in \cite{Jansson2015} has two versions, one that uses \texttt{unordered\_map}\footnote{\url{http://en.cppreference.com/w/cpp/container/unordered_map}}, which we refer to as \texttt{CPDT}, and another that uses sparsehash\footnote{\url{https://github.com/sparsehash/sparsehash}}, which we refer to as \texttt{CPDTg}. For binary input trees the hash maps are not used, thus \texttt{CPDT} and \texttt{CPDTg} are the same. The tqdist library \cite{bioinformatics14}, which we refer to as \texttt{tqDist}, has an implementation of the binary $\mathrm{O}(n \log^2 n)$ algorithm from~\cite{bmc13} and the general  $\mathrm{O}(n \log n)$ algorithm from \cite{TripletQuartetSoda13}. If the two input trees are binary the~$\mathrm{O}(n \log^2 n)$ algorithm is used. We refer to our new algorithm as~\texttt{CacheTD}.

\subparagraph*{Statistics.}

We measured the execution time of the algorithms with the \texttt{clock\_gettime} function in C++. Due to the different parser implementations, we do not consider the time taken to parse the input trees. We used the PAPI library\footnote{\url{http://icl.utk.edu/papi/}} for statistics related to instructions, L1, L2, and~L3 cache accesses and misses. Finally, we count the space of the algorithms by considering the~\emph{Maximum resident set size} returned by \texttt{/usr/bin/time -v}.

\subsection{Results}

The experiments are divided into two parts.  In the first part, we consider the performance of the algorithms when their memory requirements do not exceed the available main memory (8G RAM).  In the second part, we consider the performance when the memory requirements exceed the available main memory (by limiting the available RAM to the operating system to be 1GB), thus forcing the operating system to start using the swap space, which in turn yields the very expensive disk I/Os. All figures can be found in Appendix \ref{appendix:experimentFigures}.

\subparagraph*{RAM experiments in the Random Model.}

In Figure \ref{fig:expTimeSmall54} we illustrate a time comparison of all implementations for trees of up to $2^{21}$ leaves ($\sim 2$ million) with varying contraction probabilities. Every experiment is run 10 times, and each time on a different tree. All 10 data points are depicted together with a line that goes over their median. The compilers used were g++~5.4 with cmake~3.5.1 for \texttt{tqDist} and g++~5.4 for \texttt{CPDT}, \texttt{CPDTg}, and \texttt{CacheTD}. In all cases, \texttt{CacheTD} achieves the best performance. We note that for the case where $p_{1} = 0.95$ and $p_{2} = 0.2$, \texttt{CPDT} behaves in a different way compared to the experiments in \cite{Jansson2015}. The same can be observed for the case where $p_{1} = 0.8$ and~$p_{2} = 0.8$. The reason is because of the differences in the implementation of \texttt{unordered\_map} that exist between the different versions of the g++ compilers.  In Figure~\ref{fig:gpp} we compare the performance of \texttt{CPDT} when compiled with g++ 4.7 and g++ 5.4. When $p_{1}$ is large, i.e., $p_{1} = 0.8$ and $p_{1} = 0.95$, we observe that the older version of g++ achieves a better performance. For all other values of $p_{1}$, the version of the compiler has no effect on the performance. In Figure \ref{fig:expTimeSmall47} we have another time comparison of all implementations but now with \texttt{CPDT} compiled in g++ 4.7. The new algorithm achieves the best performance again, but now the behaviour of \texttt{CPDT} is more stable when $p_{1}$ is large. From now on, in every RAM experiment \texttt{CPDT} is compiled in g++ 4.7.

In Figure \ref{fig:expSpaceSmall} we show the space consumption of the algorithms. \texttt{CacheTD} is the only algorithm that uses $\mathrm{O}(n)$ space for both binary and general trees. In theory we expect that the space consumption is better and this is also what we get in practice. 

In Figures \ref{fig:expPtimeSmall} and \ref{fig:expPspaceSmall} we can see how the contraction parameter affects the running time and the space consumption of the algorithms respectively. 

Finally, in Figures \ref{fig:l1misses}, \ref{fig:l2misses} and \ref{fig:l3misses} we compare the cache performance of the algorithms, i.e., how many cache misses (L1, L2 and L3 respectively) the algorithms perform for increasing input sizes and varying contraction parameters. As expected, the new algorithm achieves a significant improvement over all previous algorithms.

\subparagraph*{RAM experiments in the Skewed Model.}

The main interesting experimental results are illustrated in Figure \ref{fig:alphaFigureTime}, where we plot the alpha parameter against the execution time of the algorithms, when~$n = 2^{21}$. The alpha parameter has the least effect on \texttt{CacheTD}, with the maximum running time in every graph of Figure \ref{fig:alphaFigureTime} being only a factor of 1.15 larger than the minimum. As mentioned in Section \ref{sec:prevApproaches}, \texttt{CPDT} and \texttt{CPDTg} use the heavy light decomposition for $T_{2}$. For binary trees, when~$\alpha$ approaches 0 or 1, the number of heavy paths that have to be updated because of a leaf color change decreases, thus the total number of operations of the algorithm decreases as well. We can verify this in Figure \ref{fig:alphaFigureInstructions}, where we have the plots of the alpha parameter against the instructions. The same cannot be said for all general trees, since the contraction parameters have an effect on the shape of the trees as well. In Figures \ref{fig:alphaFigureL1misses},  \ref{fig:alphaFigureL2misses}, and~\ref{fig:alphaFigureL3misses} we have the same graphs but for L1, L2, and L3 cache misses respectively.

\begin{table}[ht]
  \centering
  \caption{Random model: Time performance when limiting the available RAM to be 1GB. For the left table we have $p_{1} = p_{2} = 0$ and for the right table~$p_{1} = p_{2} = 0.5$.}
  \begin{tabular}{ccccc}
  	\toprule
  	$n$ &\texttt{CPDT}&\texttt{tqDist}& \texttt{CacheTD}\\
  	\midrule
    $2^{15}$ & 0m:01s & 0m:01s & 0m:01s\\
    $2^{16}$ & 0m:01s  & 0m:02s & 0m:01s\\
    $2^{17}$ & 0m:01s & 0m:04s & 0m:01s\\
    $2^{18}$ & 0m:02s  & 1m:03s & 0m:01s\\
    $2^{19}$ & 0m:04s  & 1h:21m & 0m:01s\\
    $2^{20}$ & 0m:09s & \bfseries 0\% & 0m:01s\\
    $2^{21}$ & 13m:12s  & - & 0m:03s\\
    $2^{22}$ & \bfseries 0\% & - & 0m:09s\\
    $2^{23}$ & -  & - & 3m:37s\\
    $2^{24}$ & - & - & 10m:35s\\\hline
  \end{tabular}\hspace{0.05\textwidth}%
  \begin{tabular}{ccccc}
  	\toprule
  	$n$ &\texttt{CPDT}&\texttt{CPDTg}&\texttt{tqDist}& \texttt{CacheTD}\\
  	\midrule
    $2^{15}$ & 0m:01s & 0m:01s & 0m:01s& 0m:01s\\
    $2^{16}$ & 0m:01s  & 0m:01s & 0m:01s& 0m:01s\\
    $2^{17}$ &  0m:01s & 0m:01s &  0m:03s & 0m:01s\\
    $2^{18}$ & 0m:03s & 0m:03s &  0m:07s & 0m:01s\\
     $2^{19}$ & 0m:07s & 0m:07s &  5m:20s & 0m:01s\\
     $2^{20}$ &  3m:43s  &1h:13m& \bfseries 0\% & 0m:02s\\
    $2^{21}$ & \bfseries 15\%  &\bfseries 0\%& - & 0m:20s\\
     $2^{22}$ &  -  &-& - & 2m:02s\\
      $2^{23}$ & - &-& - & 10m:42s\\
     $2^{24}$ &  -  &-& - & 42m:06s\\\hline
  \end{tabular}
  \label{tab:IOtimeFINALrandom}
\end{table}

\begin{table}[ht]
  \centering
  \caption{Skewed model: Time performance when limiting the available RAM to be 1GB. For both tables we have $\alpha = 0.5$. For the left table we have $p_{1} = p_{2} = 0$ and for the right table $p_{1} = p_{2} = 0.5$.}
  \begin{tabular}{ccccc}
  	\toprule
  	$n$ &\texttt{CPDT}&\texttt{tqDist}& \texttt{CacheTD}\\
  	\midrule
    $2^{15}$ & 0m:01s & 0m:01s & 0m:01s\\
    $2^{16}$ & 0m:01s  & 0m:02s & 0m:01s\\
    $2^{17}$ & 0m:01s & 0m:05s & 0m:01s\\
    $2^{18}$ & 0m:02s  & 0m:54s & 0m:01s\\
    $2^{19}$ & 0m:05s  & 50m:38s & 0m:01s\\
    $2^{20}$ & 0m:13s & \bfseries 0\% & 0m:01s\\
    $2^{21}$ & 20m:02s  & - & 0m:03s\\
    $2^{22}$ & \bfseries 0\% & - & 0m:09s\\
    $2^{23}$ & -  & - & 3m:46s\\
    $2^{24}$ & - & - & 13m:36s\\\hline
  \end{tabular}\hspace{0.05\textwidth}%
  \begin{tabular}{ccccc}
  	\toprule
  	$n$ &\texttt{CPDT}&\texttt{CPDTg}&\texttt{tqDist}& \texttt{CacheTD}\\
  	\midrule
    $2^{15}$ & 0m:01s & 0m:01s & 0m:01s& 0m:01s\\
    $2^{16}$ & 0m:01s  & 0m:01s & 0m:01s& 0m:01s\\
    $2^{17}$ &  0m:01s & 0m:01s &  0m:03s & 0m:01s\\
    $2^{18}$ & 0m:03s & 0m:03s &  0m:06s & 0m:01s\\
     $2^{19}$ & 0m:07s & 0m:07s &  3m:21s & 0m:01s\\
     $2^{20}$ &  6m:24s &2h:31m& \bfseries 7h:51m & 0m:02s\\
    $2^{21}$ & \bfseries 12\%  & \bfseries 0\%& - & 0m:19s\\
     $2^{22}$ &  -  &-& - & 1m:58s\\
      $2^{23}$ & - &-& - & 9m:42s\\
     $2^{24}$ &  -  &-& - & 38m:19s\\\hline
  \end{tabular}
  \label{tab:IOtimeFINALskewed}
\end{table}

\subparagraph*{I/O experiments.}

In Figures \ref{fig:IOtimeRandom} and \ref{fig:IOtimeSkewed} we illustrate the time, space, and I/O performance in the random and skewed model respectively. Every implementation was compiled with g++~5.4. Every experiment is run 5 times, each on a different tree. Like in the RAM experiments, all 5 data points are displayed together with a line that goes over their median. To measure the execution time, we used the \texttt{time} function of Ubuntu and thus also took into account the time taken to parse the input trees. For the input trees of size~$2^{23}$ and $2^{24}$ we used the 128 bit implementation of the new algorithms in order to avoid overflows. 

Unlike \texttt{CacheTD}, the performance of \texttt{CPDT}, \texttt{CPDTg}, and \texttt{tqDist} deteriorates significantly from the moment they start performing disk I/Os. Only \texttt{CacheTD} managed to finish running in a reasonable amount of time for all input sizes. For every other algorithm, some data points are missing because the execution time required was too big. To get an idea of how big, in Tables~\ref{tab:IOtimeFINALrandom} and \ref{tab:IOtimeFINALskewed} we again have the time performance of the algorithms in the random and skewed models respectively. This is the exact same time performance as depicted in Figures \ref{fig:IOtimeRandom} and \ref{fig:IOtimeSkewed}, however we also include some information about how well the algorithms performed on the extra data point that is missing from the figures. We set a time limit of 10 hours, and only for one pair of input trees $T_{1}$ and $T_{2}$ we measured for how many nodes of $T_{1}$ the value of~$\sum_{v\in T_{2}}|s(u) \cap s(v)|$ was found. Some algorithms managed to process only~0\% of the total nodes in~$T_{1}$, which means that they had to spend most of the time in the preprocessing step (e.g.\ building the HDT of $T_{2}$). The only algorithm that managed to produce a result was \texttt{tqDist}, requiring close to~8 hours for trees with $2^{20}$ leaves (see Table~\ref{tab:IOtimeFINALskewed}).

\section{Conclusion}
\label{sec:conclusion}%

In this paper we presented two cache oblivious algorithms for computing the triplet distance between two rooted unordered trees, one that works for binary trees and one that works for arbitrary degree trees. Both require $\mathrm{O}(n \log n)$ time in the RAM model and $\mathrm{O}(\frac{n}{B} \log_{2} \frac{n}{M})$ I/Os in the cache oblivious model. We implemented the algorithms in C++ and showed with experiments that their performance surpasses the performance of previous implementations for this problem. In particular, our algorithms are the first to scale to external~memory.

Future work and open problems involve the following:
\enlargethispage{\baselineskip} 
\begin{itemize}
\item Could the new algorithms be improved so that in the analysis, the base of the logarithm becomes $M/B$, thus giving the sorting bound in the cache oblivious model? Would the resulting algorithm be even more efficient in practice?
\item Is it possible to compute the triplet distance in $O(n)$ time? 
\item For the quartet distance computation, could we apply similar techniques to those described in Section \ref{sec:binary} and \ref{sec:general} in order to get an algorithm with better time bounds in the RAM model that also scales to external~memory?
\end{itemize}


\bibliography{lipics-v2016-sample-article}

\begin{thebibliography}{10}

\bibitem{IOmodel}
A.~Aggarwal and J.~S. Vitter.
\newblock {The Input/Output Complexity of Sorting and Related Problems}.
\newblock {\em Communications of the ACM}, 31(9):1116--1127, 1988.

\bibitem{Bansal2011}
M.~S. Bansal, J.~Dong, and D.~Fernández-Baca.
\newblock {Comparing and Aggregating Partially Resolved Trees}.
\newblock {\em Theoretical Computer Science}, 412(48):6634--6652, 2011.

\bibitem{Qmethod}
V.~Berry and O.~Gascuel.
\newblock {Inferring Evolutionary Trees with Strong Combinatorial Evidence}.
\newblock {\em Theoretical Computer Science}, 240(2):271--298, 2000.

\bibitem{TripletQuartetSoda13}
G.~S. Brodal, R.~Fagerberg, C.~N.~S. Pedersen, T.~Mailund, and A.~Sand.
\newblock {Efficient Algorithms for Computing the Triplet and Quartet Distance
  Between Trees of Arbitrary Degree}.
\newblock In {\em Proceedings of the Twenty-fourth Annual ACM-SIAM Symposium on
  Discrete Algorithms}, pages 1814--1832. Society for Industrial and Applied
  Mathematics, 2013.

\bibitem{Critchlow}
D.~E. Critchlow, D.~K. Pearl, and C.~L. Qian.
\newblock {The Triples Distance for Rooted Bifurcating Phylogenetic Trees}.
\newblock {\em Systematic Biology}, 45(3):323--334, 1996.

\bibitem{DaysAlgorithm}
W.~H.~E. Day.
\newblock {Optimal Algorithms for Comparing Trees with Labeled Leaves}.
\newblock {\em Journal of Classification}, 2(1):7--28, 1985.

\bibitem{Dobson}
A.~J. Dobson.
\newblock {Comparing the Shapes of Trees}.
\newblock In {\em Combinatorial Mathematics III}, pages 95--100. Springer
  Berlin Heidelberg, 1975.

\bibitem{Quartet}
G.~F. Estabrook, F.~R. McMorris, and C.~A. Meacham.
\newblock {Comparison of Undirected Phylogenetic Trees Based on Subtrees of
  Four Evolutionary Units}.
\newblock {\em Systematic Zoology}, 34(2):193--200, 1985.

\bibitem{COmodel}
M.~Frigo, C.~E. Leiserson, H.~Prokop, and S.~Ramachandran.
\newblock {Cache-Oblivious Algorithms}.
\newblock In {\em Proceedings of the 40th Annual Symposium on Foundations of
  Computer Science}, FOCS '99, pages 285--297. IEEE Computer Society, 1999.

\bibitem{alenex14}
M.~K. Holt, J.~Johansen, and G.~S. Brodal.
\newblock {On the Scalability of Computing Triplet and Quartet Distances}.
\newblock In {\em Proceedings of the Meeting on Algorithm Engineering and
  Expermiments}, pages 9--19. Society for Industrial and Applied Mathematics,
  2014.

\bibitem{Jansson2015}
J.~Jansson and R.~Rajaby.
\newblock {A More Practical Algorithm for the Rooted Triplet Distance}.
\newblock In {\em Algorithms for Computational Biology}, pages 109--125.
  Springer International Publishing, 2015.

\bibitem{JR_17}
J.~Jansson and R.~Rajaby.
\newblock {A More Practical Algorithm for the Rooted Triplet Distance}.
\newblock {\em Journal of Computational Biology}, 24(2):106--126, 2017.

\bibitem{RobinsonFoulds}
D.~F. Robinson and L.~R. Foulds.
\newblock {Comparison of Phylogenetic trees}.
\newblock {\em Mathematical Biosciences}, 53(1):131--147, 1981.

\bibitem{NJmethod}
N.~Saitou and M.~Nei.
\newblock {The Neighbor-Joining Method: A New Method for Reconstructing
  Phylogenetic Trees}.
\newblock {\em Molecular Biology and Evolution}, 4(4):406, 1987.

\bibitem{bmc13}
A.~Sand, G.~S. Brodal, R.~Fagerberg, C.~N.~S. Pedersen, and T.~Mailund.
\newblock {A practical {O}($n \log^2 n$) time algorithm for computing the
  triplet distance on binary trees}.
\newblock {\em BMC Bioinformatics}, 14(2):S18, 2013.

\bibitem{bioinformatics14}
A.~Sand, M.~K. Holt, J.~Johansen, G.~S. Brodal, T.~Mailund, and C.~N.~S.
  Pedersen.
\newblock {tqDist: A Library for Computing the Quartet and Triplet Distances
  Between Binary or General Trees}.
\newblock {\em Bioinformatics}, 30(14):2079, 2014.

\bibitem{biology13}
A.~Sand, M.~K. Holt, J.~Johansen, R.~Fagerberg, G.~S. Brodal, C.~N.~S.
  Pedersen, and T.~Mailund.
\newblock {Algorithms for Computing the Triplet and Quartet Distances for
  Binary and General Trees}.
\newblock {\em Biology - Special Issue on Developments in Bioinformatic
  Algorithms}, 2(4):1189--1209, 2013.

\end{thebibliography}

\newpage
\appendix
\section{Experiment Figures}
\label{appendix:experimentFigures}

\begin{figure}[ht]
    \centering
        \includegraphics[page=1,width=0.8\textwidth]{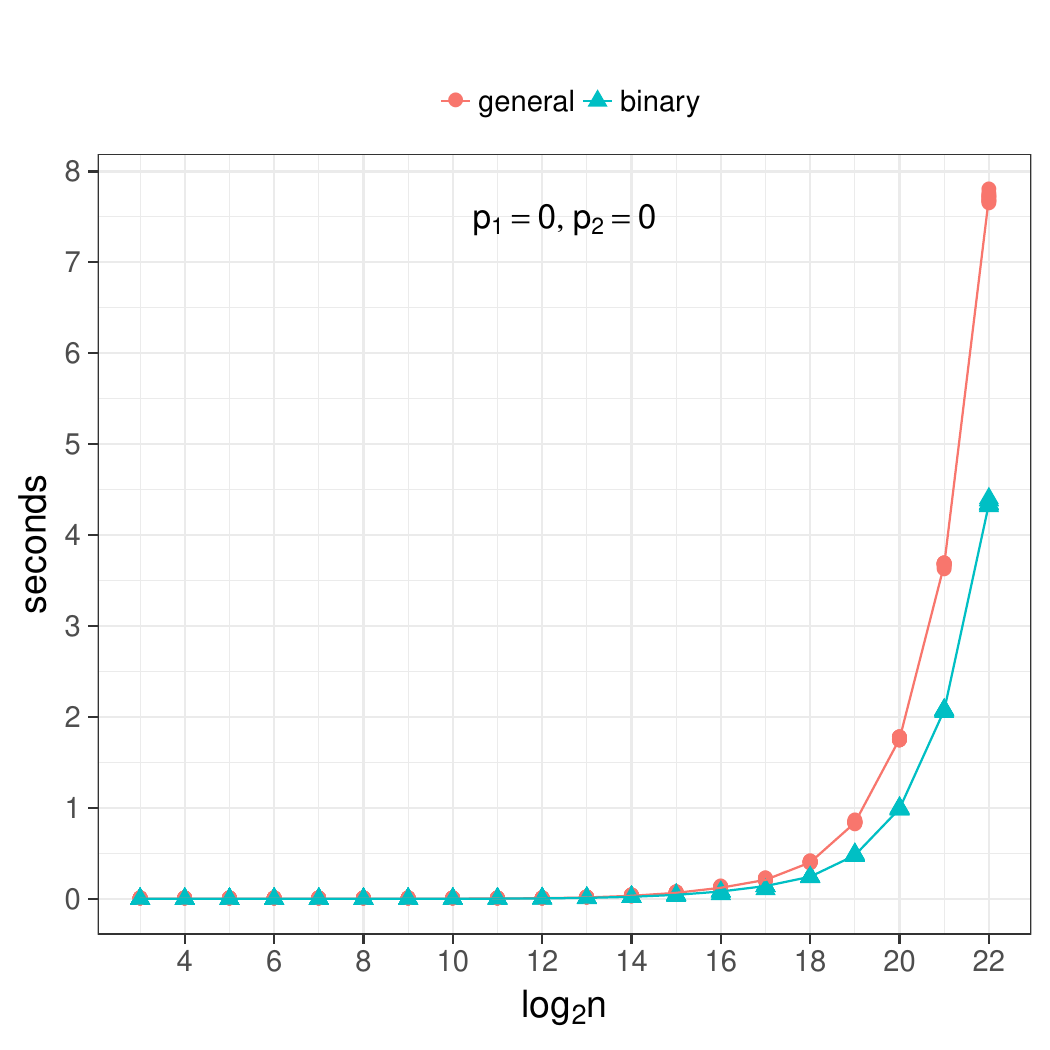}
    \caption{CacheTD: performance of binary (Section \ref{sec:binary}) and general (Section \ref{sec:general}) implementation on binary trees. All data points of the 10 runs are visible in the figure. Each run is on a different tree and the line connects the median of the  runs.}
    \label{fig:figOverhead}
\end{figure}

\begin{figure}
\captionsetup[subfigure]{justification=centering}
    \centering
    \begin{subfigure}{0.5\textwidth}
        \centering
        \includegraphics[page=1,width=1\textwidth]{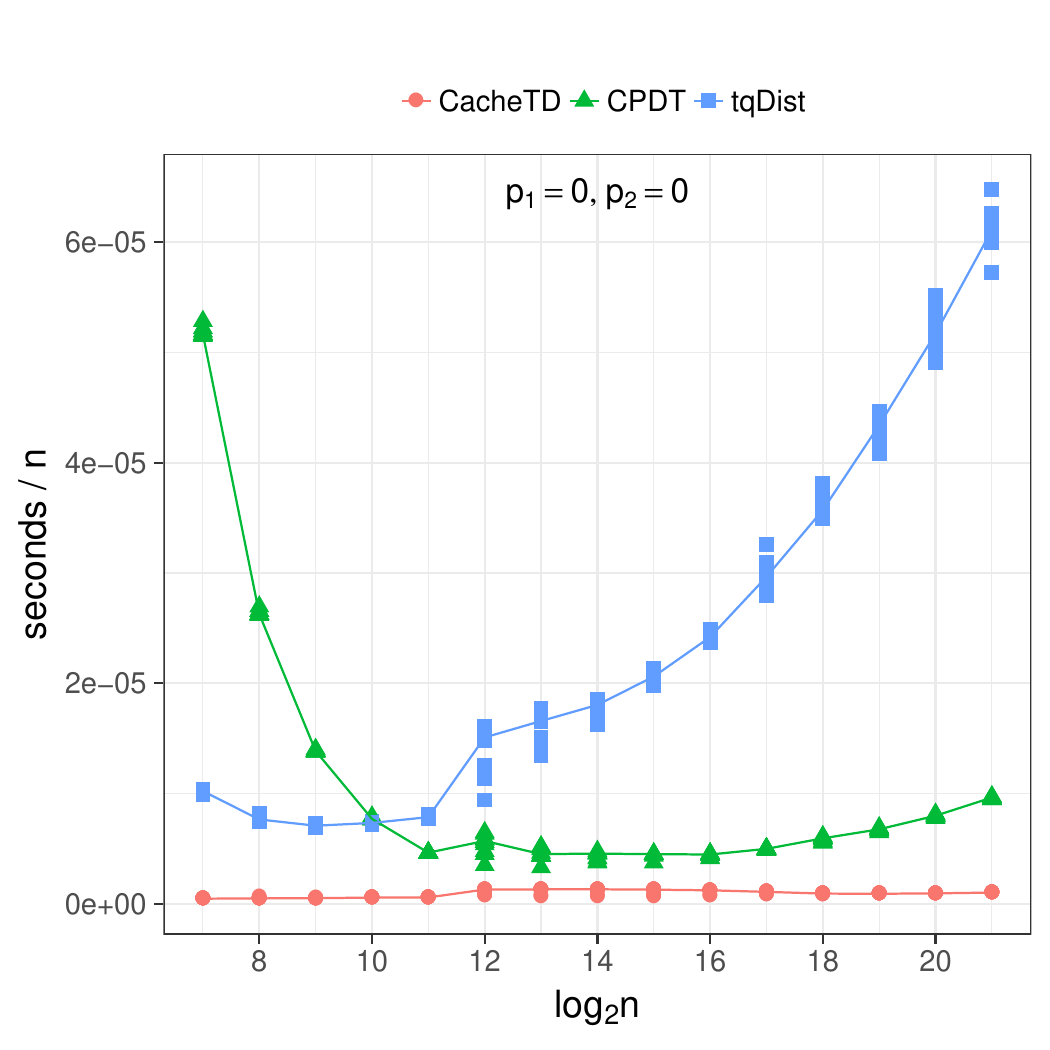}
    \end{subfigure}%
    \begin{subfigure}{0.5\textwidth}
        \centering
         \includegraphics[page=2,width=1\textwidth]{consoleFigures/experiments/random/time54.pdf}
    \end{subfigure}%
    \\
     \begin{subfigure}{0.5\textwidth}
        \centering
         \includegraphics[page=3,width=1\textwidth]{consoleFigures/experiments/random/time54.pdf}
    \end{subfigure}%
    \begin{subfigure}{0.5\textwidth}
        \centering
         \includegraphics[page=4,width=1\textwidth]{consoleFigures/experiments/random/time54.pdf}
    \end{subfigure}%
    \\
     \begin{subfigure}{0.5\textwidth}
        \centering
         \includegraphics[page=5,width=1\textwidth]{consoleFigures/experiments/random/time54.pdf}
    \end{subfigure}%
     \begin{subfigure}{0.5\textwidth}
        \centering
         \includegraphics[page=6,width=1\textwidth]{consoleFigures/experiments/random/time54.pdf}
    \end{subfigure}
    \caption{Random model: time performance, where \texttt{CPDT} is compiled in g++ version 5.4.}
    \label{fig:expTimeSmall54}
\end{figure}

\begin{figure}[ht]
\captionsetup[subfigure]{justification=centering}
    \centering
    \begin{subfigure}{0.5\textwidth}
        \centering
        \includegraphics[page=1,width=1\textwidth]{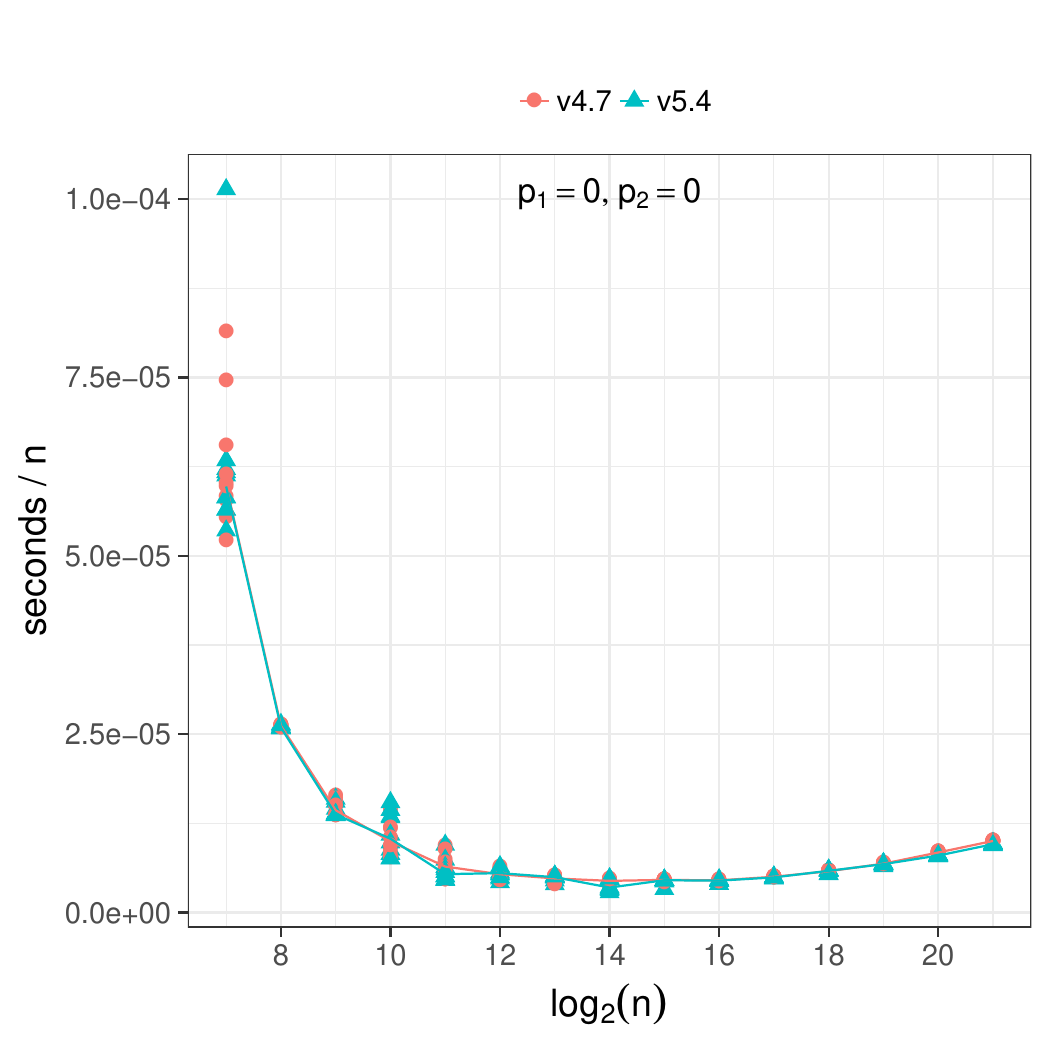}
    \end{subfigure}%
    \begin{subfigure}{0.5\textwidth}
        \centering
         \includegraphics[page=2,width=1\textwidth]{consoleFigures/experiments/random/compiler/gpp.pdf}
    \end{subfigure}%
    \\
     \begin{subfigure}{0.5\textwidth}
        \centering
         \includegraphics[page=3,width=1\textwidth]{consoleFigures/experiments/random/compiler/gpp.pdf}
    \end{subfigure}%
    \begin{subfigure}{0.5\textwidth}
        \centering
         \includegraphics[page=4,width=1\textwidth]{consoleFigures/experiments/random/compiler/gpp.pdf}
    \end{subfigure}%
    \\
     \begin{subfigure}{0.5\textwidth}
        \centering
         \includegraphics[page=5,width=1\textwidth]{consoleFigures/experiments/random/compiler/gpp.pdf}
    \end{subfigure}%
     \begin{subfigure}{0.5\textwidth}
        \centering
         \includegraphics[page=6,width=1\textwidth]{consoleFigures/experiments/random/compiler/gpp.pdf}
    \end{subfigure}
    \caption{Random model: time performance of \texttt{CPDT} when compiled with g++ 4.7 and g++ 5.4.}
    \label{fig:gpp}
\end{figure}

\begin{figure}
\captionsetup[subfigure]{justification=centering}
    \centering
    \begin{subfigure}{0.5\textwidth}
        \centering
        \includegraphics[page=1,width=1\textwidth]{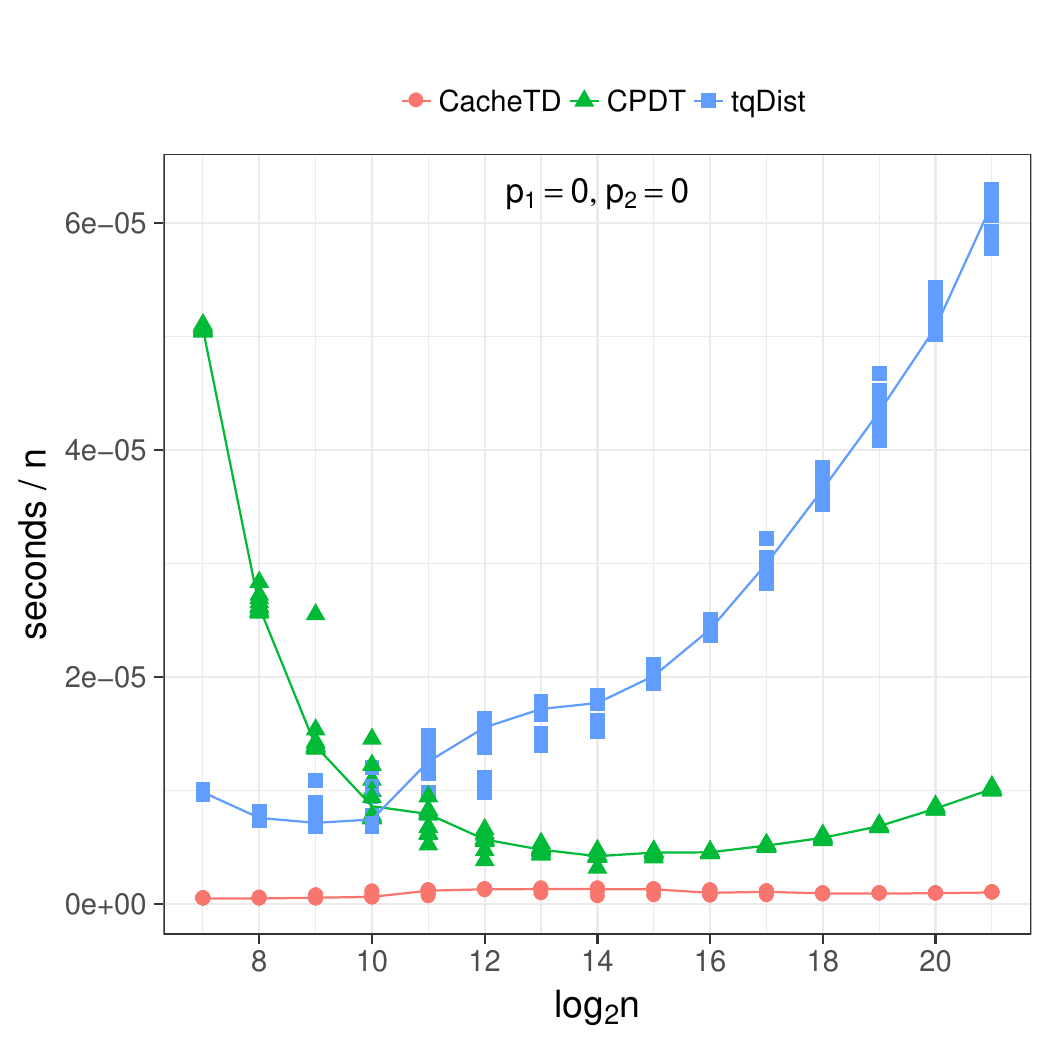}
    \end{subfigure}%
    \begin{subfigure}{0.5\textwidth}
        \centering
         \includegraphics[page=2,width=1\textwidth]{consoleFigures/experiments/random/time47.pdf}
    \end{subfigure}%
    \\
     \begin{subfigure}{0.5\textwidth}
        \centering
         \includegraphics[page=3,width=1\textwidth]{consoleFigures/experiments/random/time47.pdf}
    \end{subfigure}%
    \begin{subfigure}{0.5\textwidth}
        \centering
         \includegraphics[page=4,width=1\textwidth]{consoleFigures/experiments/random/time47.pdf}
    \end{subfigure}%
    \\
     \begin{subfigure}{0.5\textwidth}
        \centering
         \includegraphics[page=5,width=1\textwidth]{consoleFigures/experiments/random/time47.pdf}
    \end{subfigure}%
     \begin{subfigure}{0.5\textwidth}
        \centering
         \includegraphics[page=6,width=1\textwidth]{consoleFigures/experiments/random/time47.pdf}
    \end{subfigure}
    \caption{Random model: time performance, where \texttt{CPDT} is compiled in g++ version 4.7.}
    \label{fig:expTimeSmall47}
\end{figure}

\begin{figure}
\captionsetup[subfigure]{justification=centering}
    \centering
    \begin{subfigure}{0.5\textwidth}
        \centering
        \includegraphics[page=1,width=1\textwidth]{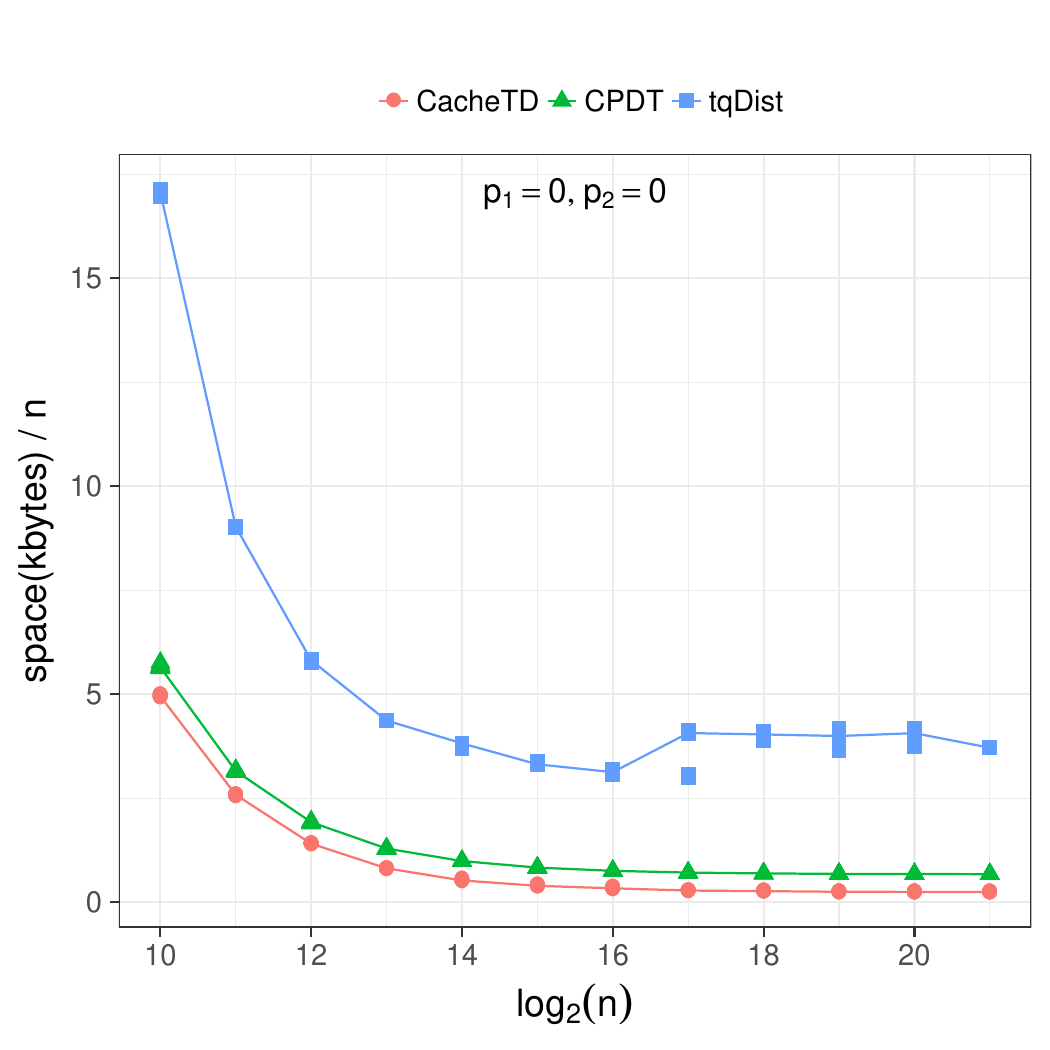}
    \end{subfigure}%
    \begin{subfigure}{0.5\textwidth}
        \centering
         \includegraphics[page=2,width=1\textwidth]{consoleFigures/experiments/random/space.pdf}
    \end{subfigure}%
    \\
     \begin{subfigure}{0.5\textwidth}
        \centering
         \includegraphics[page=3,width=1\textwidth]{consoleFigures/experiments/random/space.pdf}
    \end{subfigure}%
    \begin{subfigure}{0.5\textwidth}
        \centering
         \includegraphics[page=4,width=1\textwidth]{consoleFigures/experiments/random/space.pdf}
    \end{subfigure}%
    \\
     \begin{subfigure}{0.5\textwidth}
        \centering
         \includegraphics[page=5,width=1\textwidth]{consoleFigures/experiments/random/space.pdf}
    \end{subfigure}%
     \begin{subfigure}{0.5\textwidth}
        \centering
         \includegraphics[page=6,width=1\textwidth]{consoleFigures/experiments/random/space.pdf}
    \end{subfigure}
    \caption{Random model: space performance.}
    \label{fig:expSpaceSmall}
\end{figure}

\begin{figure}
\captionsetup[subfigure]{justification=centering}
    \centering
    \begin{subfigure}{0.5\textwidth}
        \centering
        \includegraphics[page=1,width=1\textwidth]{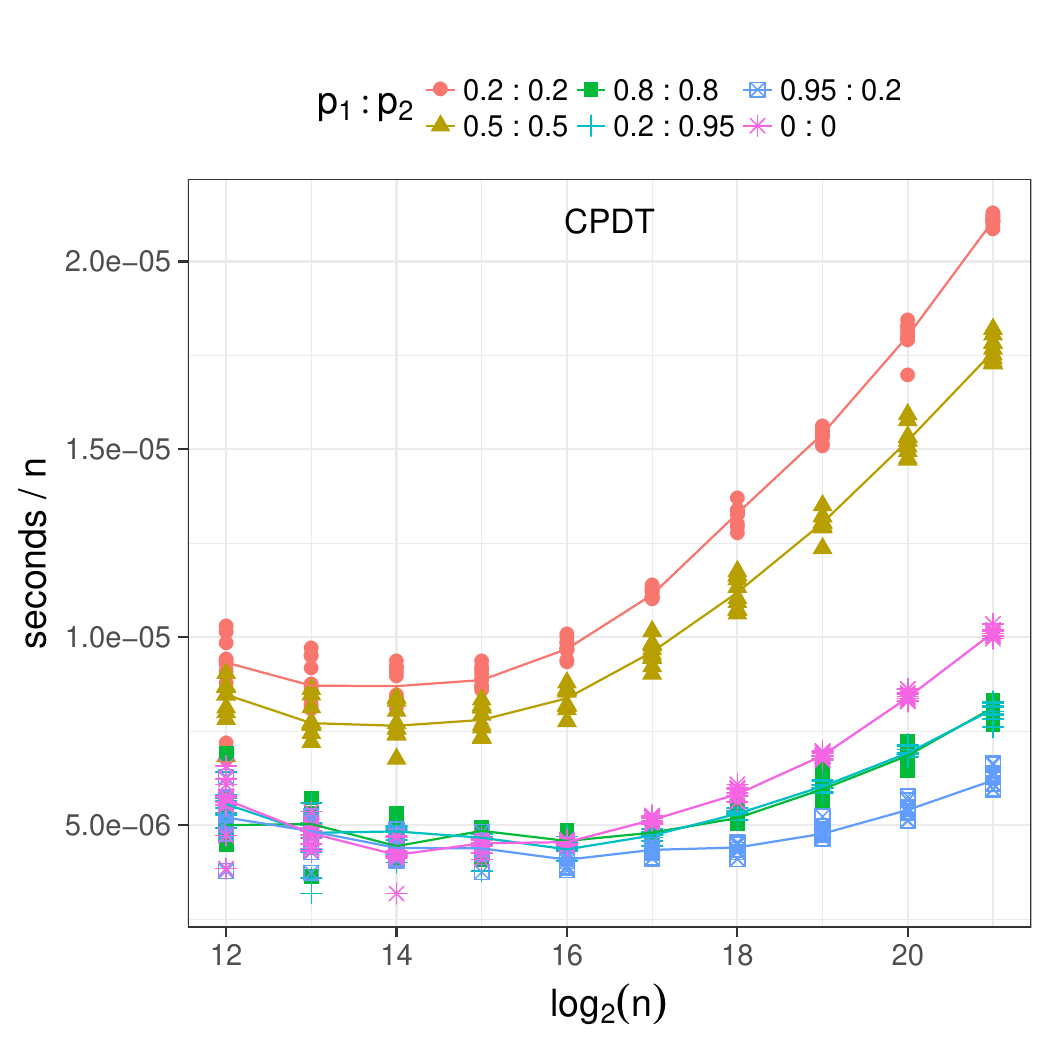}
    \end{subfigure}%
    \begin{subfigure}{0.5\textwidth}
        \centering
         \includegraphics[page=2,width=1\textwidth]{consoleFigures/experiments/random/pTime.pdf}
    \end{subfigure}%
    \\
     \begin{subfigure}{0.5\textwidth}
        \centering
         \includegraphics[page=3,width=1\textwidth]{consoleFigures/experiments/random/pTime.pdf}
    \end{subfigure}%
    \begin{subfigure}{0.5\textwidth}
        \centering
         \includegraphics[page=4,width=1\textwidth]{consoleFigures/experiments/random/pTime.pdf}
    \end{subfigure}
    \caption{Random model: how the contraction parameter affects execution time.}
    \label{fig:expPtimeSmall}
\end{figure}

\begin{figure}
\captionsetup[subfigure]{justification=centering}
    \centering
    \begin{subfigure}{0.5\textwidth}
        \centering
        \includegraphics[page=1,width=1\textwidth]{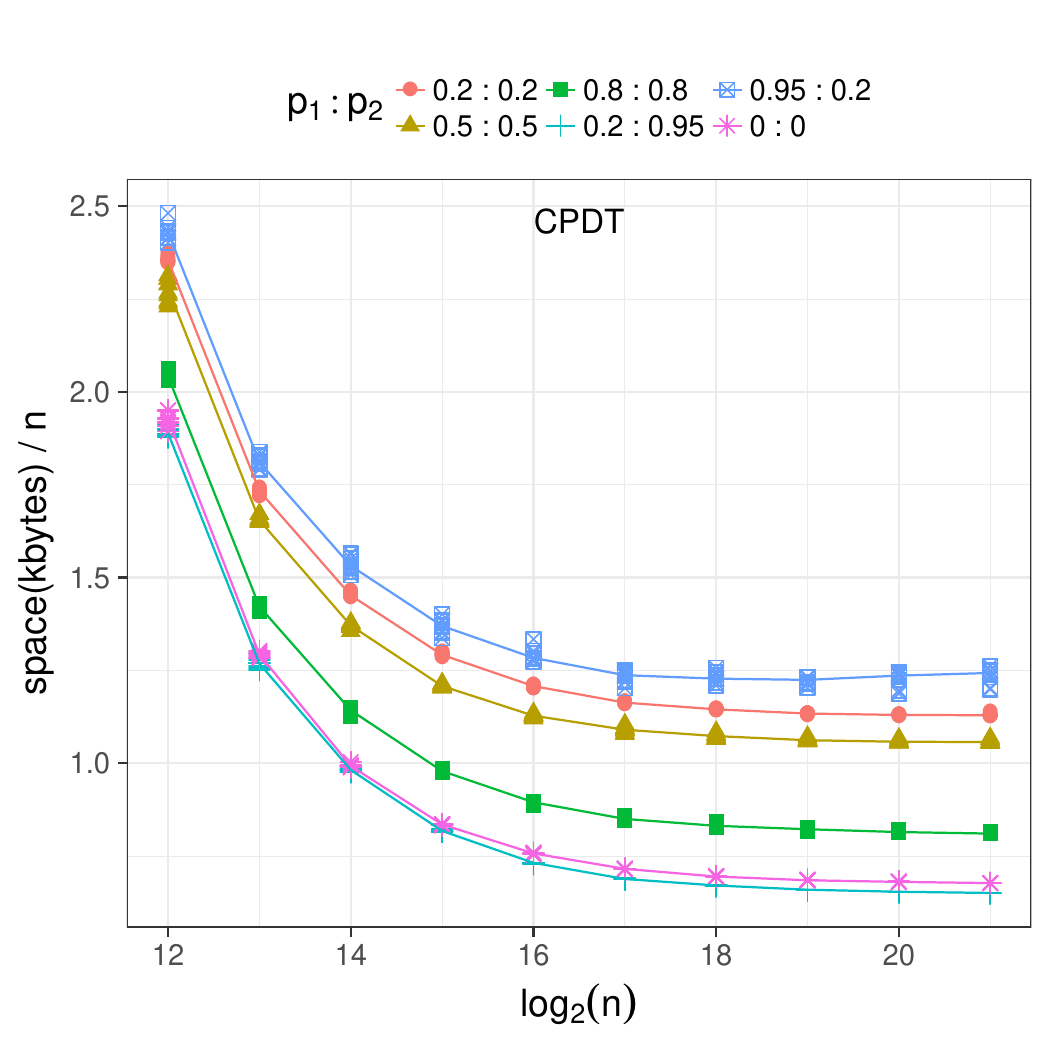}
    \end{subfigure}%
    \begin{subfigure}{0.5\textwidth}
        \centering
         \includegraphics[page=2,width=1\textwidth]{consoleFigures/experiments/random/pSpace.pdf}
    \end{subfigure}%
    \\
     \begin{subfigure}{0.5\textwidth}
        \centering
         \includegraphics[page=3,width=1\textwidth]{consoleFigures/experiments/random/pSpace.pdf}
    \end{subfigure}%
    \begin{subfigure}{0.5\textwidth}
        \centering
         \includegraphics[page=4,width=1\textwidth]{consoleFigures/experiments/random/pSpace.pdf}
    \end{subfigure}
    \caption{Random model: how the contraction parameter affects space.}
    \label{fig:expPspaceSmall}
\end{figure}

\begin{figure}
\captionsetup[subfigure]{justification=centering}
    \centering
    \begin{subfigure}{0.5\textwidth}
        \centering
        \includegraphics[page=1,width=1\textwidth]{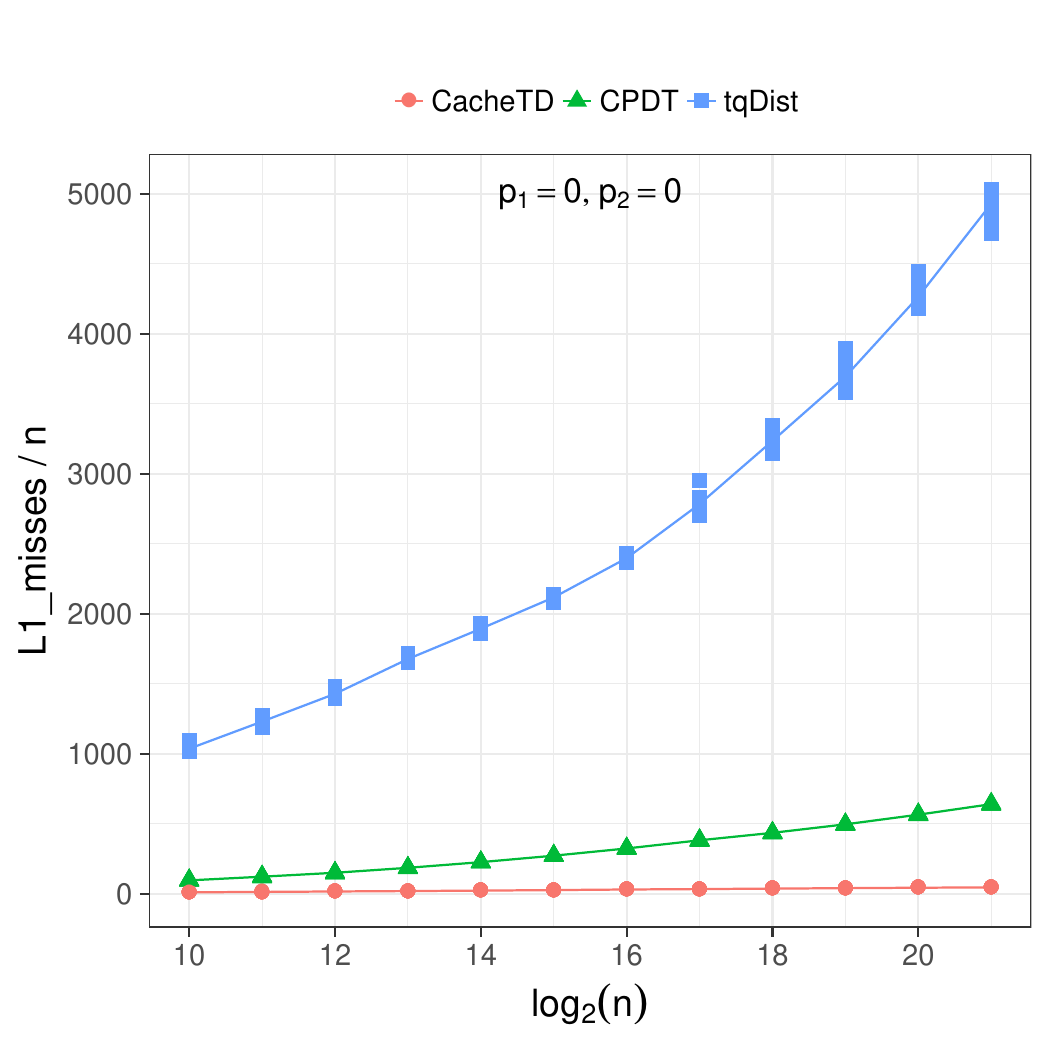}
    \end{subfigure}%
    \begin{subfigure}{0.5\textwidth}
        \centering
         \includegraphics[page=2,width=1\textwidth]{consoleFigures/experiments/random/L1misses.pdf}
    \end{subfigure}%
    \\
     \begin{subfigure}{0.5\textwidth}
        \centering
         \includegraphics[page=3,width=1\textwidth]{consoleFigures/experiments/random/L1misses.pdf}
    \end{subfigure}%
    \begin{subfigure}{0.5\textwidth}
        \centering
         \includegraphics[page=4,width=1\textwidth]{consoleFigures/experiments/random/L1misses.pdf}
    \end{subfigure}    
    \\
     \begin{subfigure}{0.5\textwidth}
        \centering
         \includegraphics[page=5,width=1\textwidth]{consoleFigures/experiments/random/L1misses.pdf}
    \end{subfigure}%
    \begin{subfigure}{0.5\textwidth}
        \centering
         \includegraphics[page=6,width=1\textwidth]{consoleFigures/experiments/random/L1misses.pdf}
    \end{subfigure}
    \caption{Random model: L1 cache misses.}
    \label{fig:l1misses}
\end{figure}

\begin{figure}
\captionsetup[subfigure]{justification=centering}
    \centering
    \begin{subfigure}{0.5\textwidth}
        \centering
        \includegraphics[page=1,width=1\textwidth]{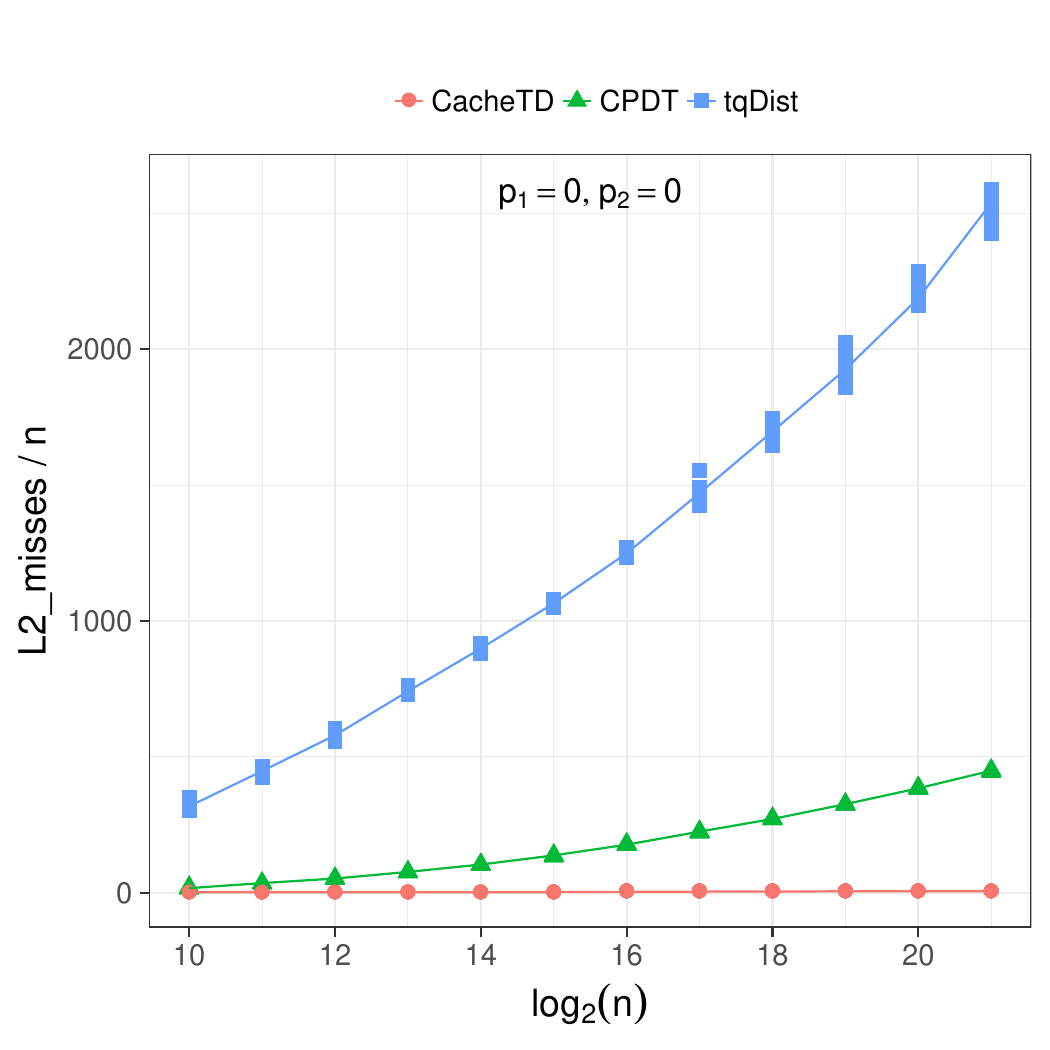}
    \end{subfigure}%
    \begin{subfigure}{0.5\textwidth}
        \centering
         \includegraphics[page=2,width=1\textwidth]{consoleFigures/experiments/random/L2misses.pdf}
    \end{subfigure}%
    \\
     \begin{subfigure}{0.5\textwidth}
        \centering
         \includegraphics[page=3,width=1\textwidth]{consoleFigures/experiments/random/L2misses.pdf}
    \end{subfigure}%
    \begin{subfigure}{0.5\textwidth}
        \centering
         \includegraphics[page=4,width=1\textwidth]{consoleFigures/experiments/random/L2misses.pdf}
    \end{subfigure}
    \\
     \begin{subfigure}{0.5\textwidth}
        \centering
         \includegraphics[page=5,width=1\textwidth]{consoleFigures/experiments/random/L2misses.pdf}
    \end{subfigure}%
    \begin{subfigure}{0.5\textwidth}
        \centering
         \includegraphics[page=6,width=1\textwidth]{consoleFigures/experiments/random/L2misses.pdf}
    \end{subfigure}
    \caption{Random model: L2 cache misses.}
    \label{fig:l2misses}
\end{figure}

\begin{figure}
\captionsetup[subfigure]{justification=centering}
    \centering
    \begin{subfigure}{0.5\textwidth}
        \centering
        \includegraphics[page=1,width=1\textwidth]{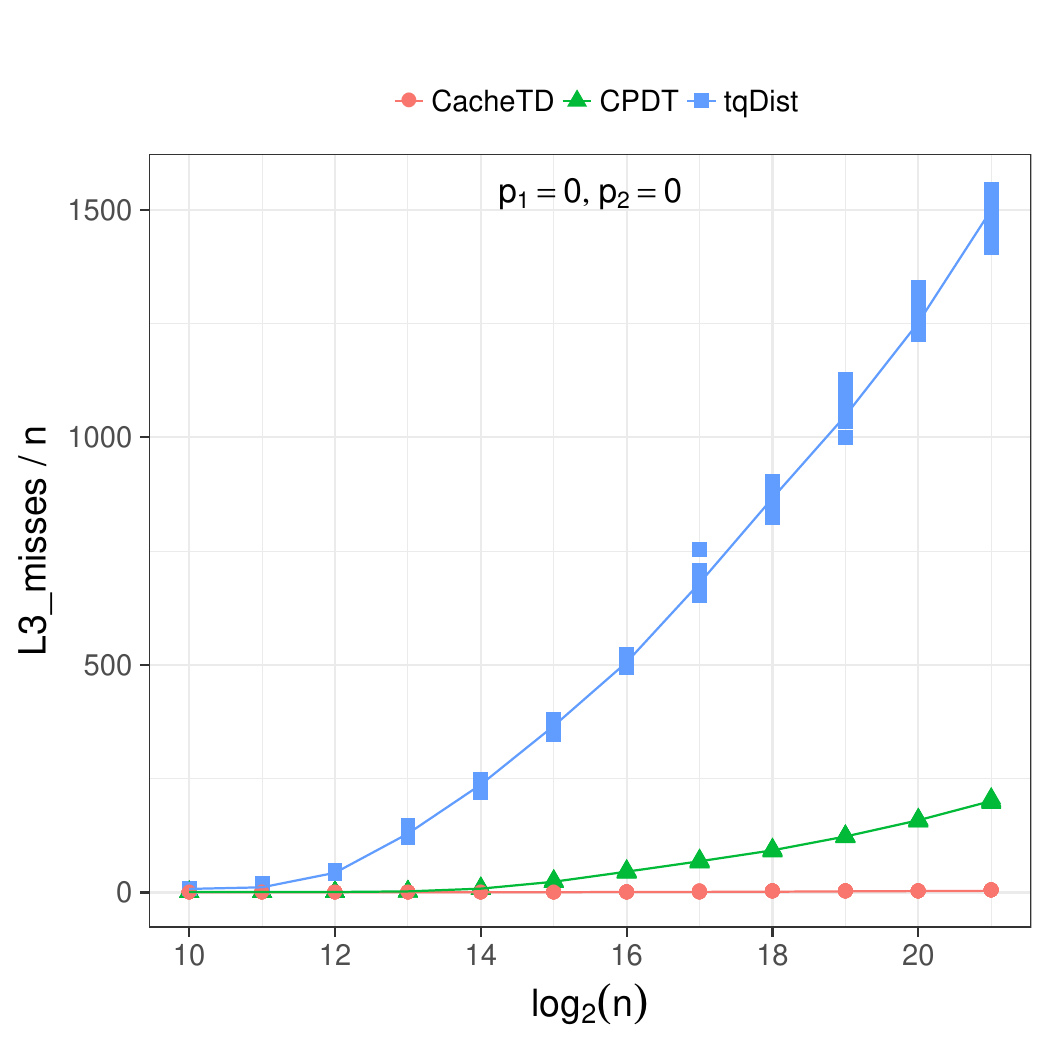}
    \end{subfigure}%
    \begin{subfigure}{0.5\textwidth}
        \centering
         \includegraphics[page=2,width=1\textwidth]{consoleFigures/experiments/random/L3misses.pdf}
    \end{subfigure}%
    \\
     \begin{subfigure}{0.5\textwidth}
        \centering
         \includegraphics[page=3,width=1\textwidth]{consoleFigures/experiments/random/L3misses.pdf}
    \end{subfigure}%
    \begin{subfigure}{0.5\textwidth}
        \centering
         \includegraphics[page=4,width=1\textwidth]{consoleFigures/experiments/random/L3misses.pdf}
    \end{subfigure}
     \\
     \begin{subfigure}{0.5\textwidth}
        \centering
         \includegraphics[page=5,width=1\textwidth]{consoleFigures/experiments/random/L3misses.pdf}
    \end{subfigure}%
    \begin{subfigure}{0.5\textwidth}
        \centering
         \includegraphics[page=6,width=1\textwidth]{consoleFigures/experiments/random/L3misses.pdf}
    \end{subfigure}
    \caption{Random model: L3 cache misses.}
    \label{fig:l3misses}
\end{figure}

\begin{figure}
\captionsetup[subfigure]{justification=centering}
    \centering
    \begin{subfigure}{0.5\textwidth}
        \centering
        \includegraphics[page=1,width=1\textwidth]{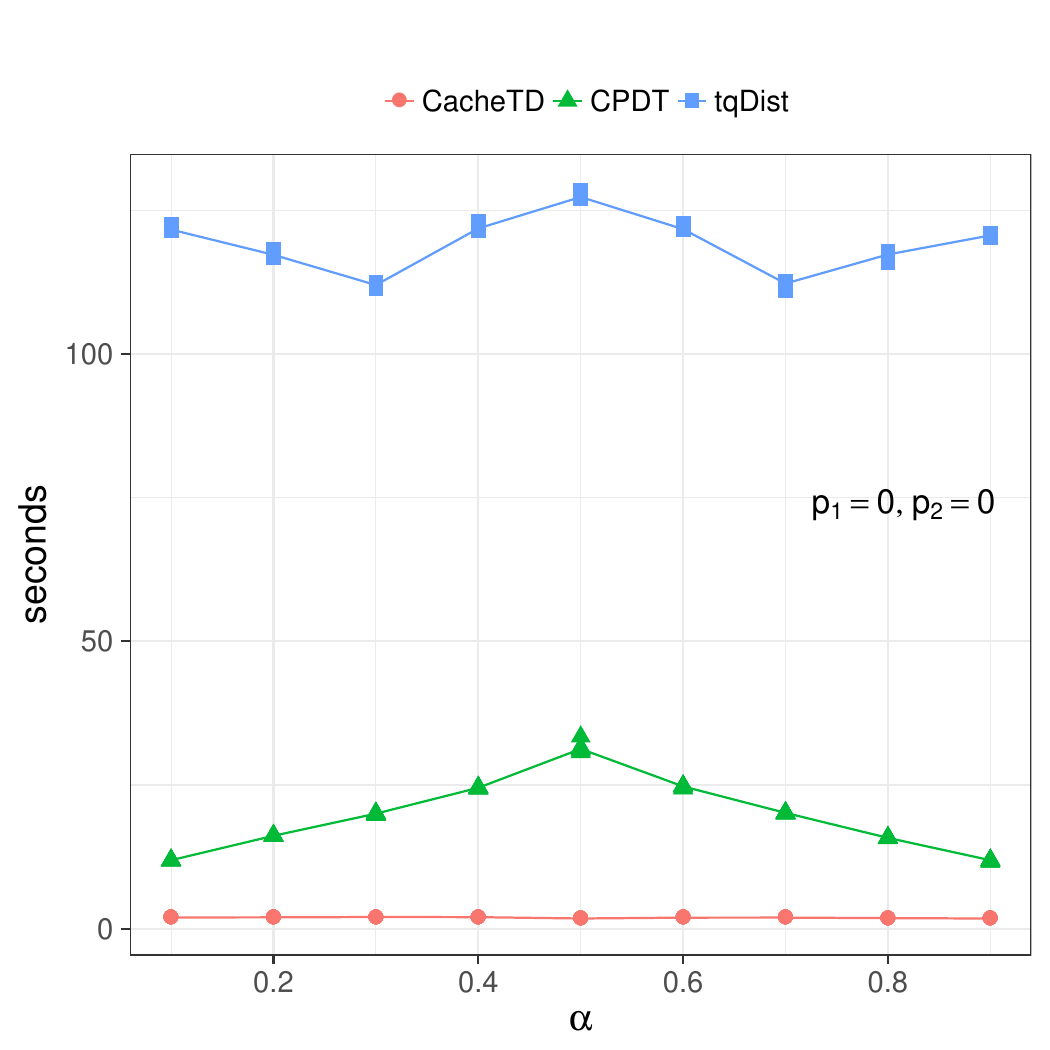}
    \end{subfigure}%
    \begin{subfigure}{0.5\textwidth}
        \centering
         \includegraphics[page=2,width=1\textwidth]{consoleFigures/experiments/skewed/time.pdf}
    \end{subfigure}%
    \\
     \begin{subfigure}{0.5\textwidth}
        \centering
         \includegraphics[page=3,width=1\textwidth]{consoleFigures/experiments/skewed/time.pdf}
    \end{subfigure}%
    \begin{subfigure}{0.5\textwidth}
        \centering
         \includegraphics[page=4,width=1\textwidth]{consoleFigures/experiments/skewed/time.pdf}
    \end{subfigure}%
    \\
     \begin{subfigure}{0.5\textwidth}
        \centering
         \includegraphics[page=5,width=1\textwidth]{consoleFigures/experiments/skewed/time.pdf}
    \end{subfigure}%
     \begin{subfigure}{0.5\textwidth}
        \centering
         \includegraphics[page=6,width=1\textwidth]{consoleFigures/experiments/skewed/time.pdf}
    \end{subfigure}
    \caption{Skewed model: running time ($n=2^{21}$).}
    \label{fig:alphaFigureTime}
\end{figure}

\begin{figure}
\captionsetup[subfigure]{justification=centering}
    \centering
    \begin{subfigure}{0.5\textwidth}
        \centering
        \includegraphics[page=1,width=1\textwidth]{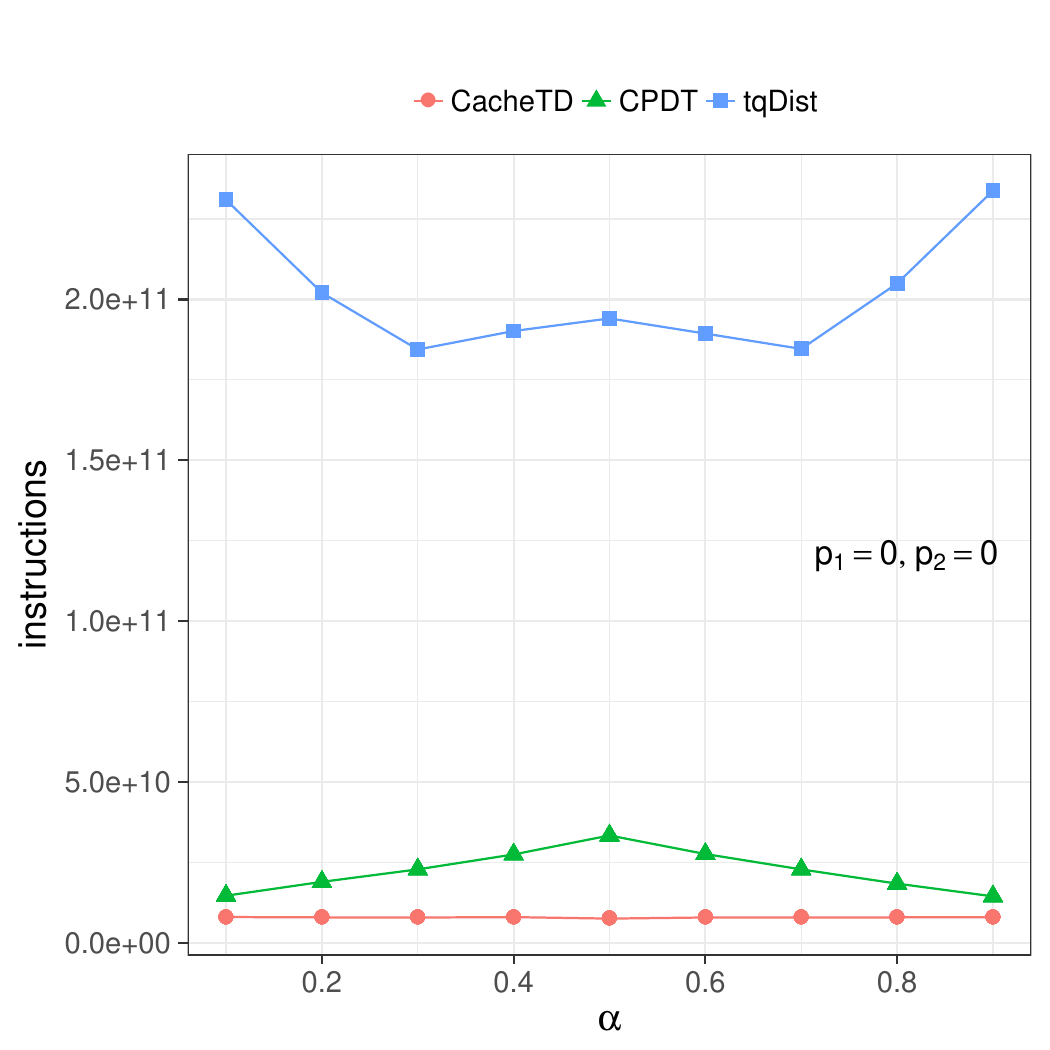}
    \end{subfigure}%
    \begin{subfigure}{0.5\textwidth}
        \centering
         \includegraphics[page=2,width=1\textwidth]{consoleFigures/experiments/skewed/instructions.pdf}
    \end{subfigure}%
    \\
     \begin{subfigure}{0.5\textwidth}
        \centering
         \includegraphics[page=3,width=1\textwidth]{consoleFigures/experiments/skewed/instructions.pdf}
    \end{subfigure}%
    \begin{subfigure}{0.5\textwidth}
        \centering
         \includegraphics[page=4,width=1\textwidth]{consoleFigures/experiments/skewed/instructions.pdf}
    \end{subfigure}%
    \\
     \begin{subfigure}{0.5\textwidth}
        \centering
         \includegraphics[page=5,width=1\textwidth]{consoleFigures/experiments/skewed/instructions.pdf}
    \end{subfigure}%
     \begin{subfigure}{0.5\textwidth}
        \centering
         \includegraphics[page=6,width=1\textwidth]{consoleFigures/experiments/skewed/instructions.pdf}
    \end{subfigure}
    \caption{Skewed model: instructions ($n=2^{21}$).}
    \label{fig:alphaFigureInstructions}
\end{figure}

\begin{figure}
\captionsetup[subfigure]{justification=centering}
    \centering
    \begin{subfigure}{0.5\textwidth}
        \centering
        \includegraphics[page=1,width=1\textwidth]{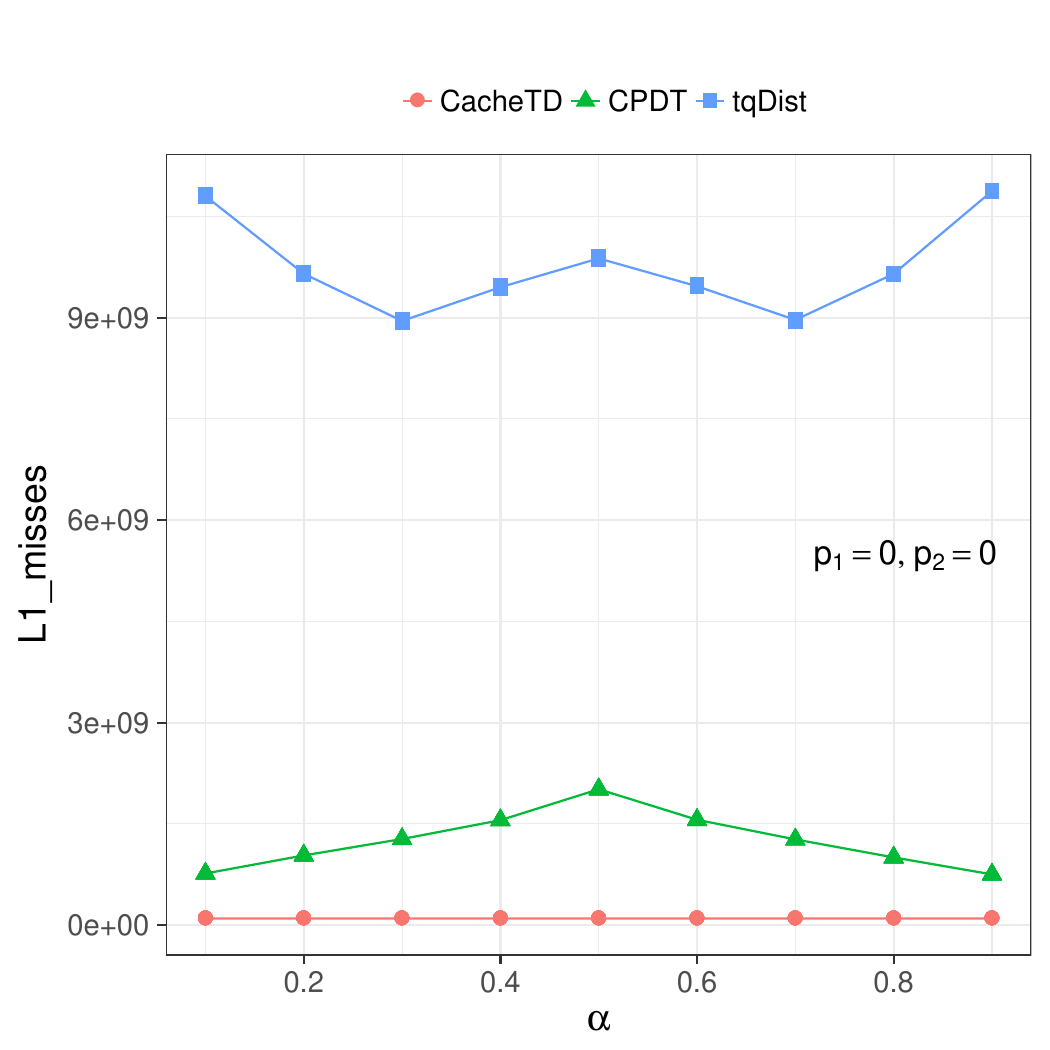}
    \end{subfigure}%
    \begin{subfigure}{0.5\textwidth}
        \centering
         \includegraphics[page=2,width=1\textwidth]{consoleFigures/experiments/skewed/l1misses.pdf}
    \end{subfigure}%
    \\
     \begin{subfigure}{0.5\textwidth}
        \centering
         \includegraphics[page=3,width=1\textwidth]{consoleFigures/experiments/skewed/l1misses.pdf}
    \end{subfigure}%
    \begin{subfigure}{0.5\textwidth}
        \centering
         \includegraphics[page=4,width=1\textwidth]{consoleFigures/experiments/skewed/l1misses.pdf}
    \end{subfigure}%
    \\
     \begin{subfigure}{0.5\textwidth}
        \centering
         \includegraphics[page=5,width=1\textwidth]{consoleFigures/experiments/skewed/l1misses.pdf}
    \end{subfigure}%
     \begin{subfigure}{0.5\textwidth}
        \centering
         \includegraphics[page=6,width=1\textwidth]{consoleFigures/experiments/skewed/l1misses.pdf}
    \end{subfigure}
    \caption{Skewed model: L1 cache misses ($n=2^{21}$).}
    \label{fig:alphaFigureL1misses}
\end{figure}

\begin{figure}
\captionsetup[subfigure]{justification=centering}
    \centering
    \begin{subfigure}{0.5\textwidth}
        \centering
        \includegraphics[page=1,width=1\textwidth]{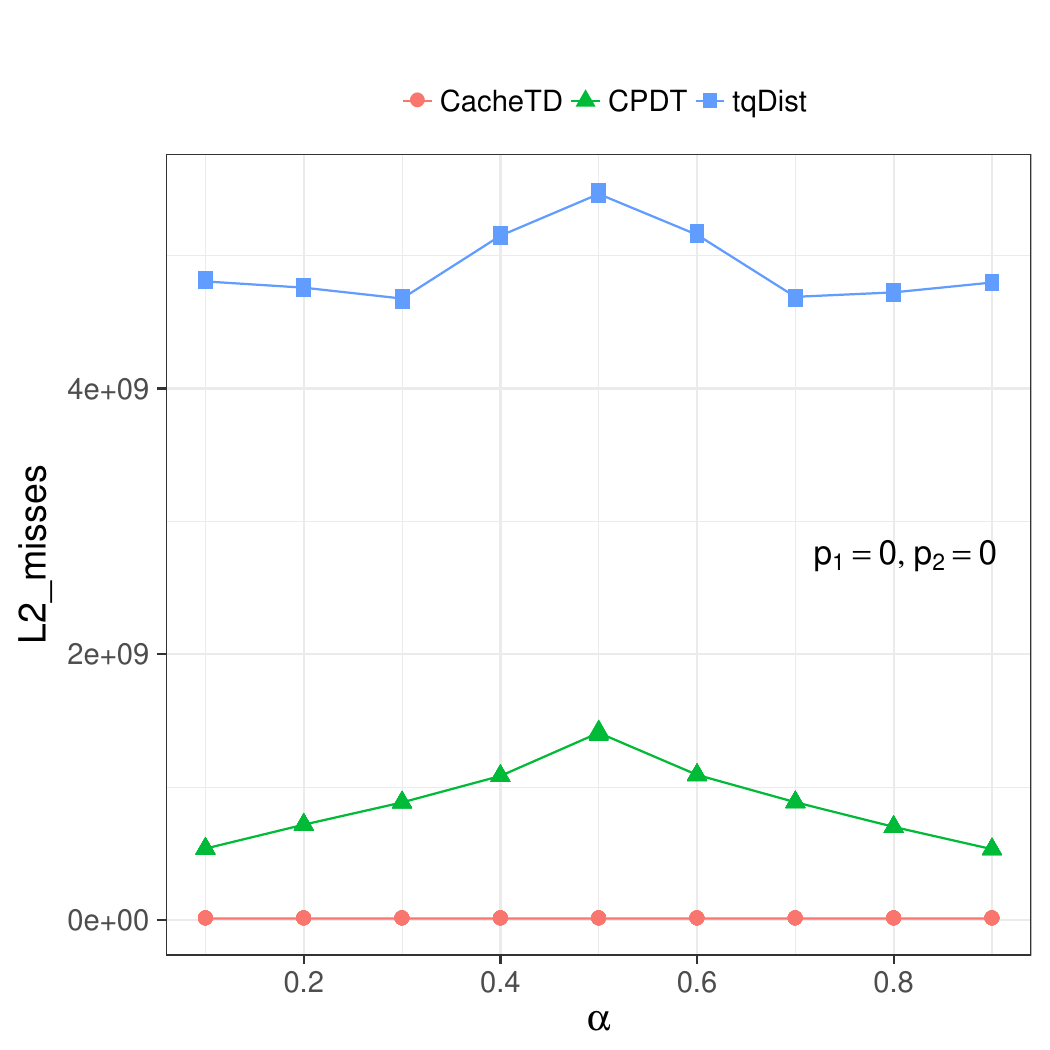}
    \end{subfigure}%
    \begin{subfigure}{0.5\textwidth}
        \centering
         \includegraphics[page=2,width=1\textwidth]{consoleFigures/experiments/skewed/l2misses.pdf}
    \end{subfigure}%
    \\
     \begin{subfigure}{0.5\textwidth}
        \centering
         \includegraphics[page=3,width=1\textwidth]{consoleFigures/experiments/skewed/l2misses.pdf}
    \end{subfigure}%
    \begin{subfigure}{0.5\textwidth}
        \centering
         \includegraphics[page=4,width=1\textwidth]{consoleFigures/experiments/skewed/l2misses.pdf}
    \end{subfigure}%
    \\
     \begin{subfigure}{0.5\textwidth}
        \centering
         \includegraphics[page=5,width=1\textwidth]{consoleFigures/experiments/skewed/l2misses.pdf}
    \end{subfigure}%
     \begin{subfigure}{0.5\textwidth}
        \centering
         \includegraphics[page=6,width=1\textwidth]{consoleFigures/experiments/skewed/l2misses.pdf}
    \end{subfigure}
    \caption{Skewed model: L2 cache misses ($n=2^{21}$).}
    \label{fig:alphaFigureL2misses}
\end{figure}

\begin{figure}
\captionsetup[subfigure]{justification=centering}
    \centering
    \begin{subfigure}{0.5\textwidth}
        \centering
        \includegraphics[page=1,width=1\textwidth]{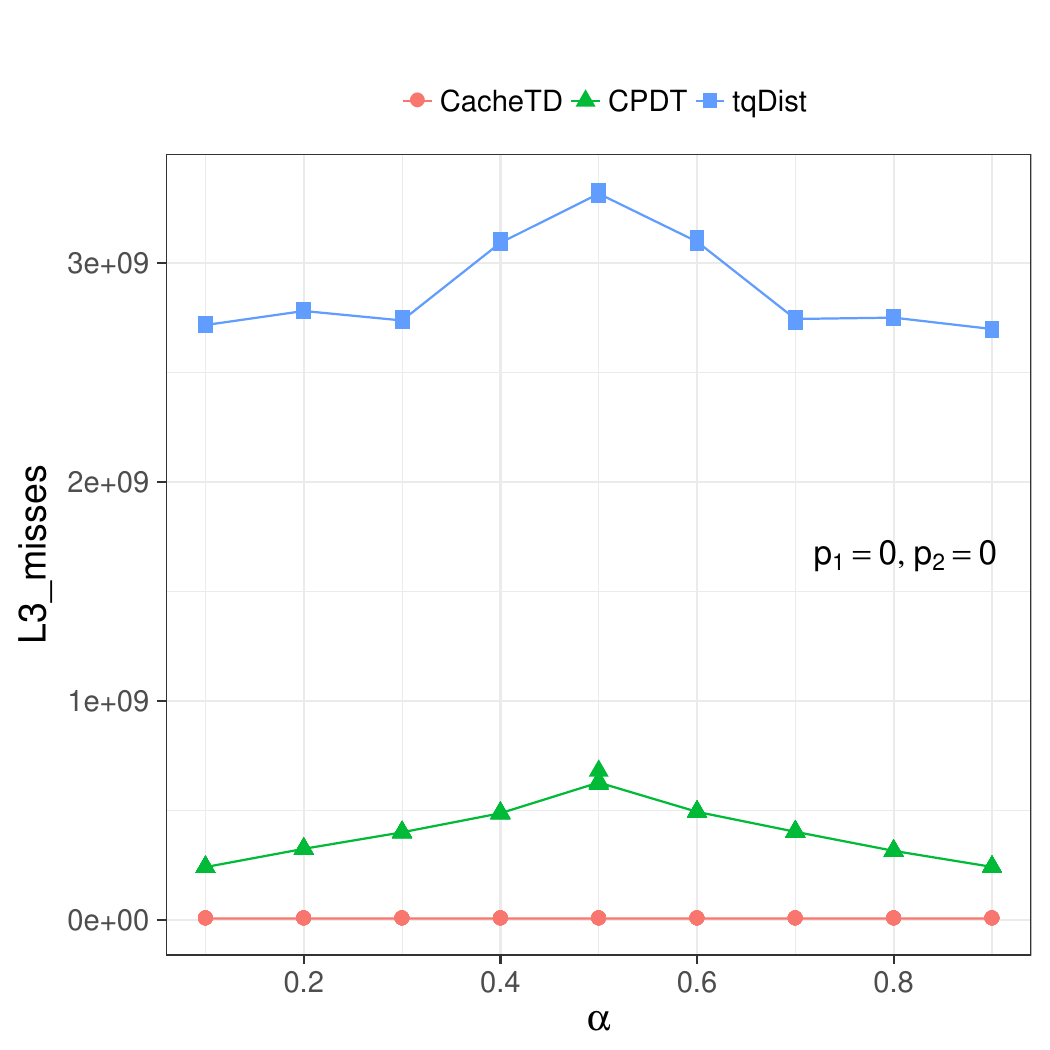}
    \end{subfigure}%
    \begin{subfigure}{0.5\textwidth}
        \centering
         \includegraphics[page=2,width=1\textwidth]{consoleFigures/experiments/skewed/l3misses.pdf}
    \end{subfigure}%
    \\
     \begin{subfigure}{0.5\textwidth}
        \centering
         \includegraphics[page=3,width=1\textwidth]{consoleFigures/experiments/skewed/l3misses.pdf}
    \end{subfigure}%
    \begin{subfigure}{0.5\textwidth}
        \centering
         \includegraphics[page=4,width=1\textwidth]{consoleFigures/experiments/skewed/l3misses.pdf}
    \end{subfigure}%
    \\
     \begin{subfigure}{0.5\textwidth}
        \centering
         \includegraphics[page=5,width=1\textwidth]{consoleFigures/experiments/skewed/l3misses.pdf}
    \end{subfigure}%
     \begin{subfigure}{0.5\textwidth}
        \centering
         \includegraphics[page=6,width=1\textwidth]{consoleFigures/experiments/skewed/l3misses.pdf}
    \end{subfigure}
    \caption{Skewed model: L3 cache misses ($n=2^{21}$).}
    \label{fig:alphaFigureL3misses}
\end{figure}

\begin{figure}
\captionsetup[subfigure]{justification=centering}
    \centering
    \begin{subfigure}{0.5\textwidth}
        \centering
        \includegraphics[page=3,width=1\textwidth]{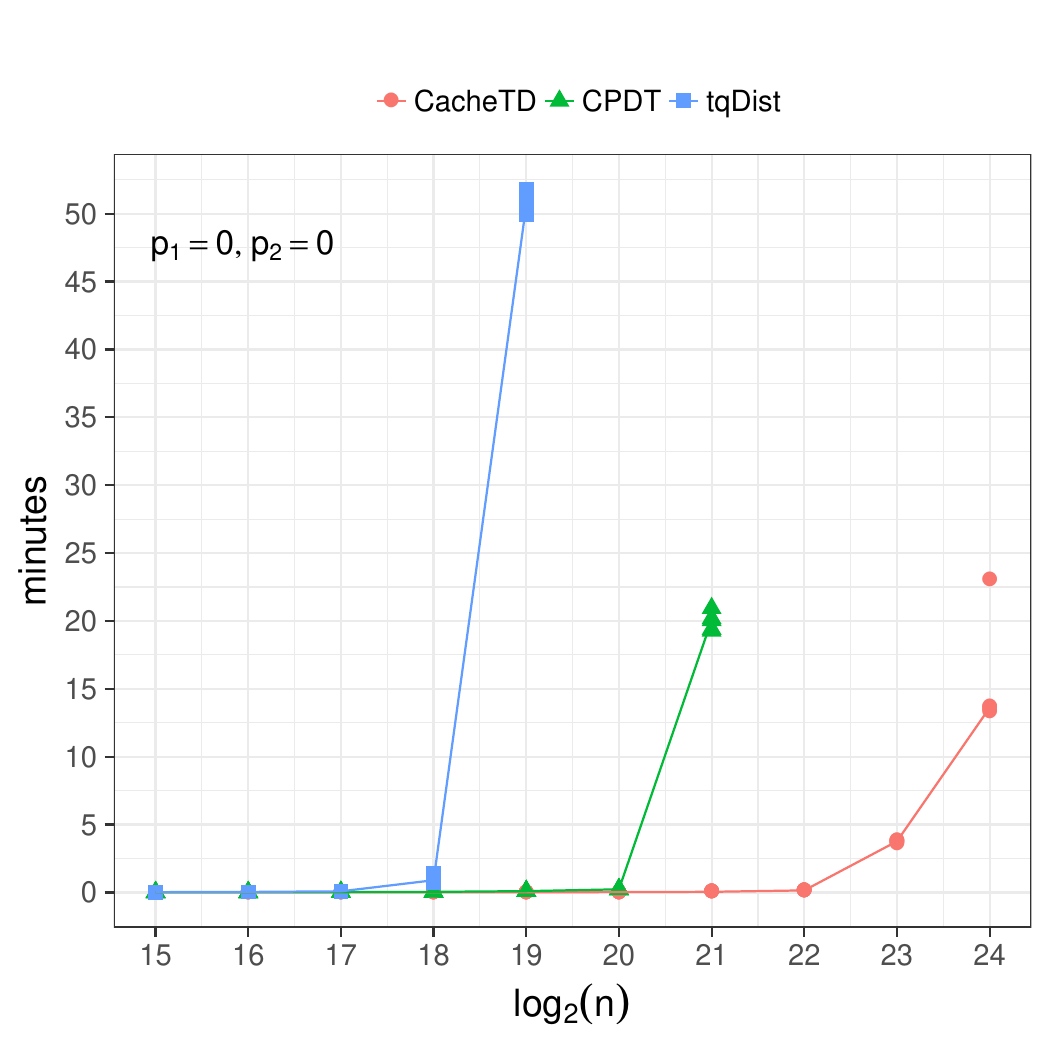}
    \end{subfigure}%
    \begin{subfigure}{0.5\textwidth}
        \centering
         \includegraphics[page=4,width=1\textwidth]{consoleFigures/experiments/io/time.pdf}
     \end{subfigure}%
     \\
    \begin{subfigure}{0.5\textwidth}
        \centering
         \includegraphics[page=3,width=1\textwidth]{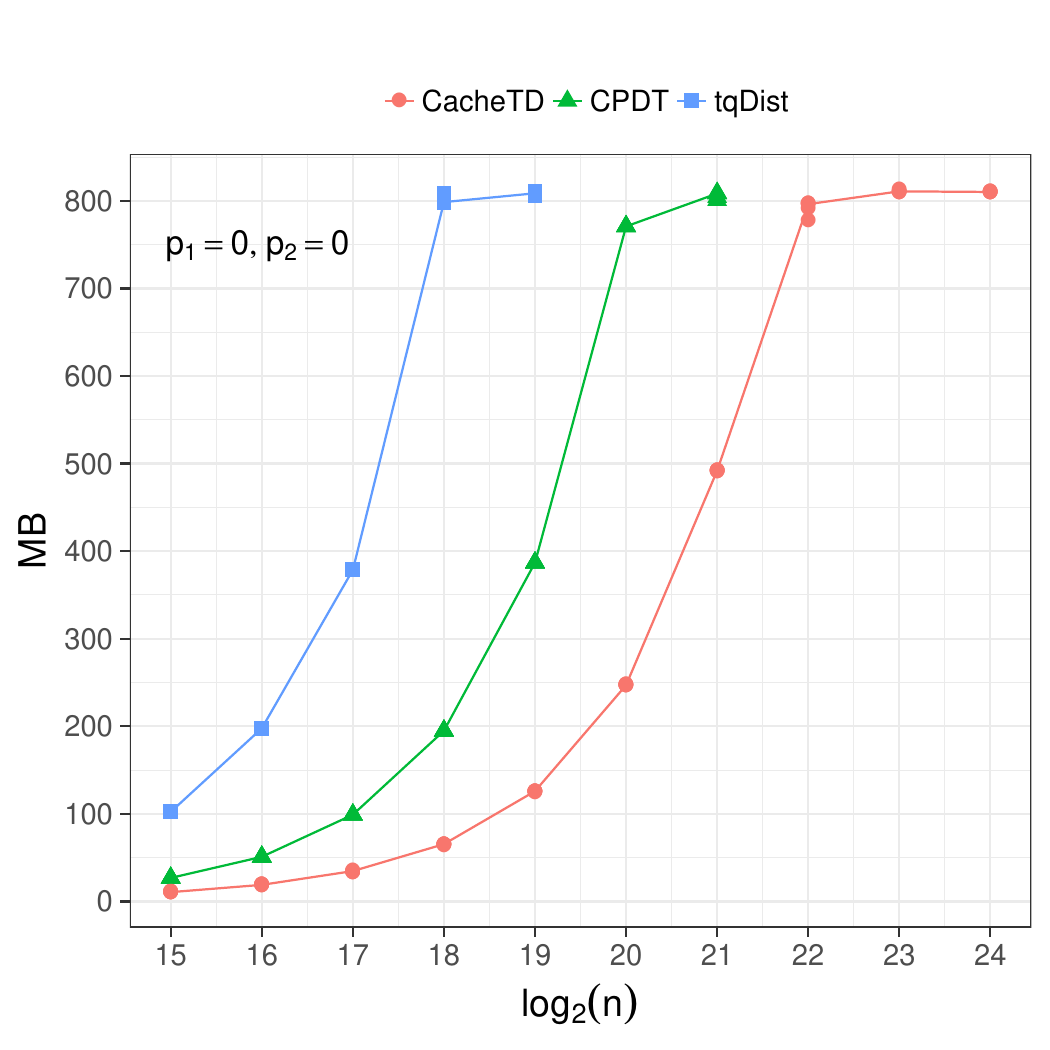}
     \end{subfigure}%
    \begin{subfigure}{0.5\textwidth}
        \centering
         \includegraphics[page=4,width=1\textwidth]{consoleFigures/experiments/io/space.pdf}
     \end{subfigure}%
     \\
    \begin{subfigure}{0.5\textwidth}
        \centering
         \includegraphics[page=3,width=1\textwidth]{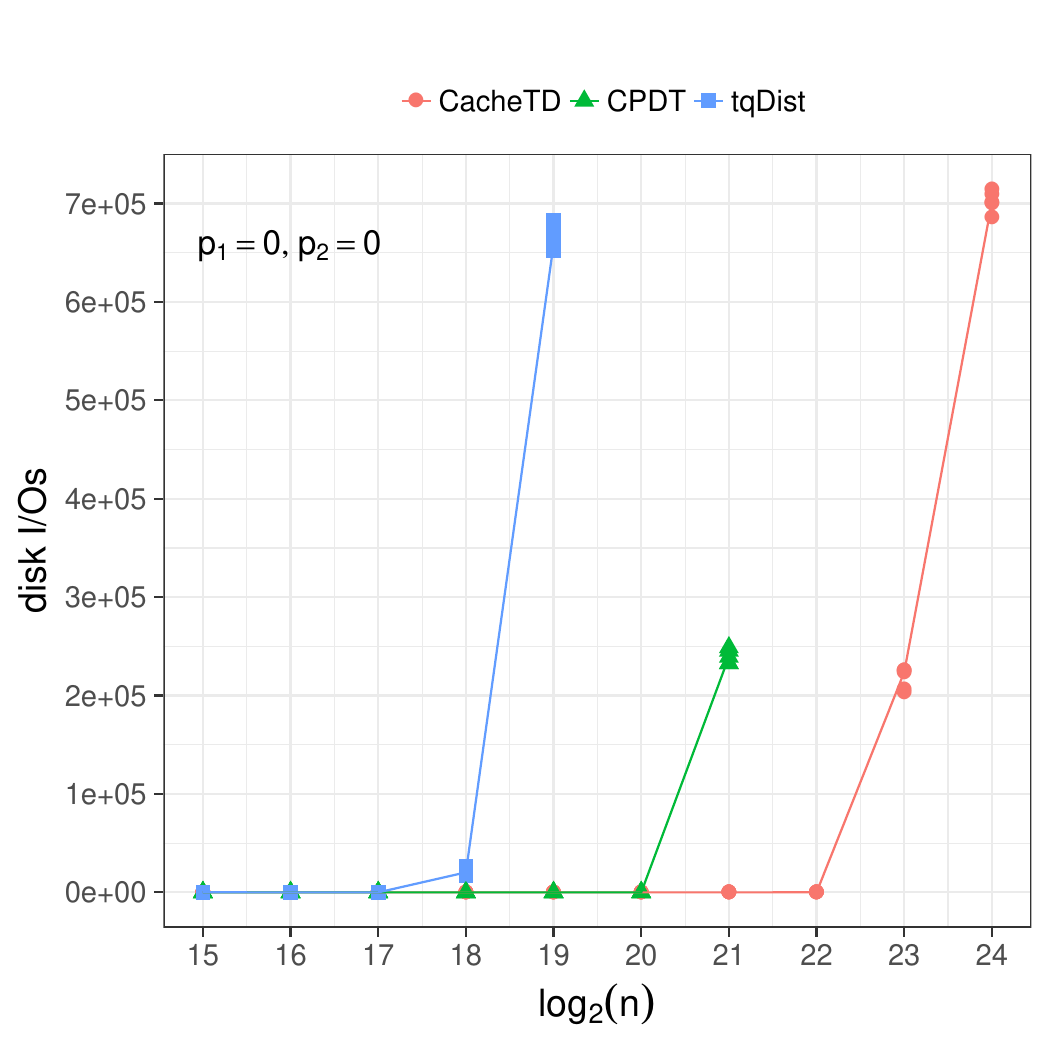}
     \end{subfigure}%
    \begin{subfigure}{0.5\textwidth}
        \centering
         \includegraphics[page=4,width=1\textwidth]{consoleFigures/experiments/io/ios.pdf}
     \end{subfigure}\\
    \caption{Random model: I/O experiments.}
    \label{fig:IOtimeRandom}
\end{figure}

\begin{figure}
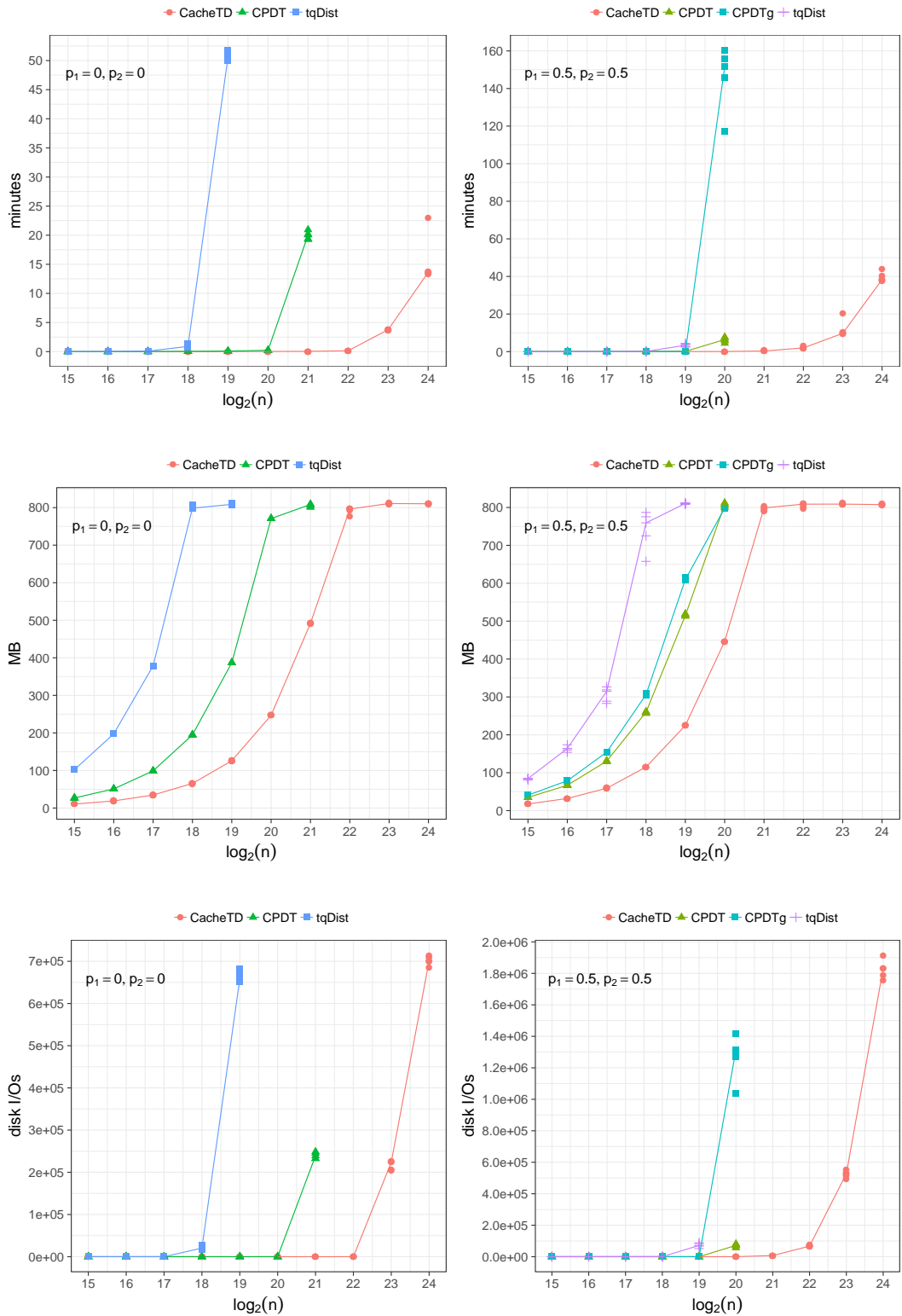

\captionsetup[subfigure]{justification=centering}
    \centering
    \begin{subfigure}{0.5\textwidth}
        \centering
        \includegraphics[page=1,width=1\textwidth]{consoleFigures/experiments/io/time.pdf}
    \end{subfigure}%
    \begin{subfigure}{0.5\textwidth}
        \centering
         \includegraphics[page=2,width=1\textwidth]{consoleFigures/experiments/io/time.pdf}
     \end{subfigure}%
     \\
    \begin{subfigure}{0.5\textwidth}
        \centering
         \includegraphics[page=1,width=1\textwidth]{consoleFigures/experiments/io/space.pdf}
     \end{subfigure}%
    \begin{subfigure}{0.5\textwidth}
        \centering
         \includegraphics[page=2,width=1\textwidth]{consoleFigures/experiments/io/space.pdf}
     \end{subfigure}%
     \\
    \begin{subfigure}{0.5\textwidth}
        \centering
         \includegraphics[page=1,width=1\textwidth]{consoleFigures/experiments/io/ios.pdf}
     \end{subfigure}%
    \begin{subfigure}{0.5\textwidth}
        \centering
         \includegraphics[page=2,width=1\textwidth]{consoleFigures/experiments/io/ios.pdf}
     \end{subfigure}\\
    \caption{Skewed model: I/O experiments with $\alpha = 0.5$.}
    \label{fig:IOtimeSkewed}
\end{figure}
\end{document}